\documentclass[pdflatex,sn-vancouver-num]{sn-jnl}

\usepackage{graphicx}%
\usepackage{multirow}%
\usepackage{amsmath,amssymb,amsfonts}%
\usepackage{amsthm}%
\usepackage{mathrsfs}%
\usepackage[title]{appendix}%
\usepackage{xcolor}%
\usepackage{textcomp}%
\usepackage{manyfoot}%
\usepackage{booktabs}%
\usepackage{algorithm}%
\usepackage{algorithmicx}%
\usepackage{algpseudocode}%
\usepackage{listings}%
\usepackage{enumitem}%
\usepackage{tikz}
\usetikzlibrary{arrows.meta,positioning,fit,shapes.geometric}
\usepackage{anyfontsize}

\newtheorem{theorem}{Theorem}

\newcommand{\bM}{\boldsymbol{M}} 
\newcommand{\bN}{\boldsymbol{N}}
\newcommand{\bW}{\boldsymbol{W}} 
\newcommand{\bZ}{\boldsymbol{Z}} 
\newcommand{\R}{\mathbb{R}} 
\renewcommand{\P}{{\mathbb{P}}} 

\begin{document}

\title[Article Title]{Considering causality in the construction of molecular signatures of lifestyle exposures} 

\author[1]{\fnm{Diana} \sur{Wu}}\email{wud@iarc.who.int}

\author*[1]{\fnm{Vivian} \sur{Viallon}}\email{viallonv@iarc.who.int}

\affil[1]{\orgdiv{Nutrition and Metabolism Branch}, \orgname{International Agency for Research on Cancer (IARC/WHO)}, \orgaddress{\city{Lyon}, \country{France}}}

\maketitle
{\small
\setlength{\parindent}{0pt}
\setlength{\parskip}{4pt}
Declarations:
Where authors are identified as personnel of the International Agency for Research on Cancer/World Health Organization, the authors alone are responsible for the views expressed in this article and they do not necessarily represent the decisions, policy, or views of the International Agency for Research on Cancer/World Health Organization.
\\
Ethics approval and consent to participate:
Not applicable
\\
Consent for publication:
Not applicable

Availability of data and materials:
The code to generate the data and for analyses in this study is available at: https://github.com/IARCBiostat/SimulationSignatures

Competing interests: The authors have no competing interests to declare that are relevant to the content of this article.

Funding:
This work was financially supported in part by funding for grant IIG\_FULL\_2022\_013 which was obtained from Wereld Kanker Onderzoek Fonds (WKOF) as part of the World Cancer Research Fund International grant programme. This work was also supported in part by the EU Marie Curie Doctoral Training Network "ColoMARK", grant agreement no. 101072448.

Authors' contributions:
 Conceptualization: V.V.; Methodology: V.V., D.W.; Formal analysis and investigation: V.V.,D.W.; Writing - original draft preparation: V.V.,D.W.; Writing - review and editing: V.V.,D.W.; Funding acquisition: V.V.; Resources: V.V.; Supervision: V.V.

Acknowledgements:
Not applicable
}

\newpage

\section*{Abstract}
Molecular signatures derived from omics data are increasingly used in epidemiological studies to characterize lifestyle exposures, either as proxies of exposure or to provide insight into disease mechanisms. These signatures are typically constructed by regressing the exposure on high-dimensional omics features. In the literature, an initial univariate screening step has sometimes been applied prior to multivariate modelling, but the causal implications of this choice have not yet been considered. 
Focusing on settings where the exposure causally influences molecular features (and not the reverse), we use directed acyclic graphs (DAGs) and $d$-separation arguments to show that collider bias may arise when the screening step is ignored, leading to the inclusion of non-causal features in the signature. We further demonstrate that the screening step can mitigate this bias.
Our simulation studies illustrate that screening reduces the inclusion of non-causal features, albeit at the cost of lower sensitivity and reduced correlation between the exposure and the resulting signature. Overall, we recommend applying univariate screening prior to signature construction, particularly when the inclusion of non-causal features is undesirable, such as in mechanistic studies.

\newpage

\section{Background}\label{sec1}
As omics data, including metabolomics and proteomics, become more common in epidemiological studies, biomarker discovery through omics has emerged as an expanding field of interest \cite{brennan_metabolomics_2021}. Omics-based biomarkers often involve data-driven discovery through the construction of signatures, usually as a weighted sum of individual biomarkers, for example to represent a lifestyle exposure \cite{smith_healthy_2022,watanabe_multiomic_2023,shah_dietary_2023,li_development_2024,perry_proteomic_2024,loftfield_novel_2021,assi_metabolic_2018,assi_are_2018,wan_plasma_2025,menni_metabolomic_2017}.

These “molecular signatures” usually serve either as a composite biomarker to more accurately characterize lifestyle exposures or are used to investigate mechanisms of their health consequences \cite{scalbert_food_2019,brennan_metabolomics_2021}.    
As biomarkers of lifestyle exposures, signatures are intended to provide objective and reproducible surrogates to self-reported lifestyle assessments, which might help to mitigate measurement-error challenges associated with the latter \cite{cuparencu_towards_2024,smith_healthy_2022,watanabe_multiomic_2023,shah_dietary_2023,li_development_2024,perry_proteomic_2024}.  Omic signatures are also used to uncover biological mechanisms affected by lifestyle exposures potentially related to the etiology of chronic diseases \cite{scalbert_food_2019,loftfield_novel_2021,assi_metabolic_2018,assi_are_2018,wan_plasma_2025,menni_metabolomic_2017}. In the latter case (and in some respect in the former case as well; see the Discussion for more details), signatures should include features that are causally influenced by the exposure only.

Multivariate statistical approaches, such as penalized linear regression models (e.g., lasso or elastic-net), are commonly used for the derivation of molecular signatures  \cite{waldron_optimized_2011}. These approaches typically use the entire set of available features as the input (hereafter referred to as the no-screening strategy)  \cite{assi_metabolic_2018,smith_healthy_2022,watanabe_multiomic_2023,shah_dietary_2023,li_development_2024,perry_proteomic_2024,wan_plasma_2025}
, although some authors have restricted input to features showing statistically significant associations with the exposure in an initial univariate analysis (hereafter referred to as the screening strategy) \cite{menni_metabolomic_2017,cirulli_profound_2019,zhu_proteomic_2025}. 
Previous studies have used either one of these two strategies to derive novel biomarkers, draw mechanistic conclusions, or both \cite{cirulli_profound_2019,zhu_proteomic_2025}. Although screening steps can be applied for computational efficiency \cite{fan_sure_2008}, to our knowledge there is no clear indication of the relative merits of the two strategies  for the construction of signatures and their implications in the interpretation of signature-based results in the literature.  

In this article, we address this gap using arguments based on directed acyclic graphs (DAGs) and results from simulation studies. We focus on the setting in which the lifestyle exposure may causally influence the molecular features and not vice versa, which is the usual assumption in most previous work on molecular signatures of lifestyle exposures \cite{rattray_beyond_2018, walker_metabolome_2019}. In Section \ref{sec:Methods}, we formally establish that the no-screening strategy can induce collider bias, resulting in the inclusion of molecular features spuriously associated with, rather than causally influenced by, the exposure in the signature, which is undesirable particularly when the goal is to investigate biological processes influenced by the exposure. We further show that the screening strategy does typically not suffer from this limitation. Empirical results on simulated data are presented in Section \ref{sec:Results} to illustrate the performance of the screening and no-screening strategies on finite samples. We conclude with a discussion in Section \ref{sec:Disc}.

\section{Methods} \label{sec:Methods}

\subsection{Notation and assumptions}
Denote by $E$ the lifestyle exposure of interest and by $\bM = (M_1,\dots,M_p)$ a vector of $p$ molecular features. In most practical settings, $(M_1,\dots,M_p)$ exhibit complex correlation patterns, reflecting both direct causal relationships among the features and shared underlying causes. Accordingly, we denote by $W = (W_1, \dots, W_d)$, for some $d\geq 1$, the vector of observed shared causes of $E$ and $(M_1,\dots,M_p)$, which we will suppose to be fully observed. We further denote by $U = (\nu_1,\dots,\nu_q)$, for some $q\geq 1$, the set of other causes of $M_1,\dots,M_p$, which are not causes of exposure $E$ and are usually unobserved or not accounted for in the analysis. 

Let \(G\) be the DAG over nodes \(\{E,M_1,\dots,M_p,W_1, \dots, W_d, \nu_1,\dots,\nu_q\}\). We denote by $\mathrm{De}(E)$ the descendants of $E$, i.e. 
\[
\mathrm{De}(E):=\{\,M_j: {\rm a\ directed\ path\ of\ the\ form\ }E\to \dots \to M_j \text{ exists in } G\,\}
\] and by $\mathrm{Ch}(E) \subseteq \mathrm{De}(E)$ the set of children of \(E\), 
\[
\mathrm{Ch}(E):=\{\,M_j: \text{an edge }E\to M_j \text{ exists in } G\,\},
\]

In practice, some relevant molecular features may be unobserved. We denote by $\bN$ the vector of size $p_0$ comprising the observed molecular features. Without loss of generality, we assume that the $p_0$ observed features are the first $p_0$ features in $\bM$. For any vector $\bZ\in\R^K$,  we let $\bZ_S$ and $\bZ_{-S}$ be the vector of features indexed by $S$ and $\{1, \dots, K\} \setminus S$ for  any set $S \subset \{1, \dots, K\}$. Finally, we will at times use the notation $\bM$ and $\bN$  to denote the set of features in vectors $\bM$ and $\bN$, respectively.  

Throughout, we assume that the joint distribution \(\P\) of $\{E,M_1,\dots,M_p,$ $W_1, \dots, W_d,$ $\nu_1,\dots,\nu_q\}$ is Markov and faithful with respect to \(G\), which ensures that all and only the conditional independences in the distribution correspond to d-separation relations in the DAG \citep{pearl_causality_2009,  peters2017elements}. In addition, we assume that $E$ is a cause of at least one of these observed features.

Figure \ref{fig:Dagtoy} presents the DAG of a toy example for illustration. Here, $p=19$, $d=1$, $q=2$, \( \mathrm{De}(E) = \{M_1, M_3, M_4, M_6, M_8, M_{11}, M_{12}, M_{13}, M_{17}, M_{18}\}\), \( \mathrm{Ch}(E) = \{M_1,  M_6, M_{11}, M_{17}\}\), and $\nu_1, \nu_2$ but also $M_{17}$, are not observed ($p_0=18$).

\begin{figure}
\begin{tikzpicture}[
    >=Latex,
    node distance=1.6cm and 2.0cm,
    expo/.style   ={draw, circle, minimum width=9mm, minimum height=7mm, fill=blue!20, text=black, thick},
    child/.style  ={draw, circle, minimum size=8mm, fill=blue!05, text=black, thick},
    childlat/.style  ={draw, diamond, minimum size=8mm, fill=blue!05, text=black, thick, dashed},
    desc/.style   ={draw, circle, minimum size=8mm, fill=green!10, text=black, thick},
    confoun/.style = {draw, circle, minimum size=8mm, text=black, thick},
    inR/.style    ={draw, circle, aspect=1.3, minimum size=9mm, fill=red!25, text=black, thick}, 
    nondesc/.style={draw, circle, minimum size=8mm, fill=red!25, text=black, thick},
    latent/.style ={draw, diamond, minimum size=8mm, fill=gray!10, text=black, dashed},
    lab/.style    ={draw=none, align=left}
]

\node[expo] (E) {E};

\node[child, right=2.5cm of E, yshift= 2.5cm] (C1) {$M_1$};
\node[child, right=2.5cm of E]               (C2) {$M_6$};
\node[child, right=2.5cm of E, yshift=-2.5cm] (C3) {$M_{11}$}; 

\node[desc, right=1.6cm of C1] (D1) {$M_3$};
\node[desc, right=1.6cm of D1] (D2) {$M_4$};

\node[desc, right=2.2cm of C2, yshift=0.75cm] (D3) {$M_8$};

\node[desc, right=2.2cm of C3] (D4) {$M_{12}$};
\node[desc, right=1.85cm of D4] (D5) {$M_{13}$};

\node[inR,  above=0.5cm of C1, xshift=0.5cm] (P1) {$M_2$};
\node[inR, above=0.5cm of C2, xshift=0.5cm] (P2) {$M_7$};

\node[latent, right=1.3cm of C2, yshift=-1cm] (U1) {$\nu_1$};
\node[inR,     right=1.6cm of U1]              (S1) {$M_9$};

\node[latent,right=1.3cm of C3, yshift=-1.7cm] (U2) {$\nu_2$};
\node[inR,     right=1.6cm of U2]              (S2) {$M_{14}$};
\node[inR, right=1.6cm of S2] (S3) {$M_{15}$};
\node[nondesc, right=1.6cm of S1] (N1) {$M_{10}$};
\node[nondesc, right=1.6cm of D2]               (N3) {$M_5$};

\node[confoun, below=1.5cm of E] (W) {$W_1$};
\node[nondesc, below=4cm of E] (M16) {$M_{16}$};

\node[childlat, right=2.5cm of E, yshift=-4.96cm] (unobs) {$M_{17}$};

\node[desc,right=2.2cm of unobs, yshift=-0.5cm, fill=green!10] (M18) {$M_{18}$};
\node[inR,right=2.2cm of unobs, yshift = -2cm] (M19) {$M_{19}$};

\draw[->] (E) -- (C1);
\draw[->] (E) -- (C2);
\draw[->] (E) -- (C3);
\draw[->, dotted] (E) -- (unobs);

\draw[->] (C1) -- (D1);
\draw[->] (D1) -- (D2);

\draw[->] (C2) -- (D3);

\draw[->] (C3) -- (D4);
\draw[->] (D4) -- (D5);

\draw[->, dotted] (unobs) -- (M18);

\draw[->] (D1) ..controls(4.5, 0.5).. (C2);

\draw[->] (P1) -- (C1);
\draw[->] (P2) -- (C2);
\draw[->, dotted] (M19) -- (unobs);

\draw[->] (U1) -- (C2);
\draw[->] (U1) -- (S1);

\draw[->] (U2) -- (C3);
\draw[->] (U2) -- (S2);
\draw[->] (S3) -- (S2);

\draw[->] (S1) -- (N1);
\draw[->] (N3) -- (D2);
\draw[->] (W) -- (E);
\draw[->] (W) -- (M16);
\draw[->] (W) -- (C3);

\node[draw, rounded corners, inner sep=6pt, fit={(E) (P1) (P2) (N1) (S3) (M19)}] (dagframe) {};
\end{tikzpicture}

\caption{ Directed acyclic graph (DAG) illustrating the causal relationships between a lifestyle exposure $E$ and molecular features $M_1,\ldots,M_{19}$ in our toy example. The exposure $E$ has a direct effect on some features (children, light blue) and an indirect effects on downstream descendants (green). The graph also includes non-descendants of $E$ (red), shared causes of $E$ and some features (white; $W_1$) other shared causes of features (typically unobserved, light grey; $\nu_1$ and $nu_2$). Dashed diamonds represent unobserved variables ($\nu$'s, as well as the child $M_{17}$ in this example).}\label{fig:Dagtoy}
\end{figure}

\subsection{Intuition}
As exemplified in Figure \ref{fig:Dagtoy}, the DAG $G$ typically contains the following motifs
\begin{align*}
& E \to M_i, \qquad\qquad\qquad\quad \,\ \  E \to M_i \to \dots \to M_k, \\ 
& E  \to  \dots \to M_k \leftarrow M_l, \qquad 
E  \to  \dots \to M_k \leftarrow U \to M_\ell.   
\end{align*}

For simplicity, first consider each motif as an example of an elementary DAG. The following statements are direct consequences of the notion of d-separation in DAGs under our assumption that the distribution $\P$ is Markov and faithful to the DAG \cite{pearl_causality_2009}. In the first two DAGs, $M_i$ is directly influenced by $E$ (child of $E$), so that $M_i$ is associated with $E$, both marginally and upon conditioning on the other features. In the last three examples, $M_k$ is a descendant of $E$. It is associated with $E$ marginally, but is not associated with $E$ after conditioning on any of its ancestors, such as $M_i$ in the second example. In the third and fourth examples, $M_l$ and $M_\ell$ are not influenced by $E$ and both are marginally non-associated with $E$. However, $M_k$ is a collider on these paths, implying that conditioning on $M_k$ induces a spurious, non-causal, association between $E$ and $M_l$ (third example) or between $E$ and $M_\ell$ (fourth example). If $M_k$ is not observed (such as $M_{17}$ in the DAG of Figure \ref{fig:Dagtoy}), $M_l$ and $M_\ell$ remain  marginally non-associated with $E$ and become associated with $E$ upon conditioning on descendants on $M_k$, or any of its ancestors in $\mathrm{De}(E)$. For example, in the DAG of Figure \ref{fig:Dagtoy},  $M_{19}$ is associated $E$ when conditioning on $\{M_{18}\}$, since $\{M_{18}\}$ is a descendant of the (unobserved) collider $M_{17}$.

In the literature, molecular signatures have often been derived based on multivariate regression models of $E$ on the available molecular features, $\bN$ \cite{smith_healthy_2022, watanabe_multiomic_2023, shah_dietary_2023, li_development_2024, perry_proteomic_2024, menni_metabolomic_2017,zhu_proteomic_2025}. These models estimate the association between $E$ and each feature included in the model, conditional on all other variables included in the models. The four simple examples described in the previous paragraph illustrate that including a child or descendant of $E$ in the model can induce collider bias, thereby creating associations between $E$ and variables that are not its descendants, such as $M_l$ and $M_\ell$ above. Theorem \ref{thm:no-screening} below more formally establishes that (penalized) multivariate regressions based on all available features (no-screening strategy) typically select non-descendants of $E$ in the signature. In the example of Figure \ref{fig:Dagtoy}, features selected in the signature based on the no-screening strategy would asymptotically be \(S_{{\rm no-s}}=\{M_1, M_2, M_3, M_6, M_7, M_9, M_{11}, M_{14}, M_{15}, M_{18}, M_{19}\}\). In particular, the signature would include the non-descendants $M_2, M_7, M_9, M_{14}, M_{15}$ and $M_{19}$. 

In Section \ref{sec:Formal}, we also formally establish that the screening step based on univariate analyses of $E$ and each $M_j$, for $1\leq j\leq p$, adjusted for $W$, asymptotically corrects for this defect. Indeed, the screening step selects exactly the observed descendants of $E$, asymptotically. Obviously, the multivariate model restricted to observed descendants cannot include non-descendants. In the example of Figure \ref{fig:Dagtoy},  it can be shown that features selected in the signature based on the screening strategy is asymptotically $S_{\rm s} = \{M_1, M_3, M_6, M_{11}, M_{18}\}$ (keeping in mind that $M_{17}$ is unobserved).




\subsection{Formal arguments}\label{sec:Formal}

\subsubsection{The no-screening and screening strategies}\label{sec:ScreenNoScreen}

Here, we focus on versions of the no-screening and screening strategies that adjust for confounders $\bW$. The Discussion provides details on the limitations and possible advantages of their unadjusted counterparts.  

As mentioned above, the most standard approach for the derivation of molecular signatures of lifestyle exposures is (penalized) multivariate regression. Under the no-screening strategy, all available molecular features $\bN$ are used, and the model assumes the existence of parameters $\alpha^*\in\R$, $\beta^*\in\R^{p_0}$ and $\theta^*\in\R^d$ such that
\begin{equation}
E = \alpha_0^* + \bN^T\alpha^* + \bW^T\theta^* + \varepsilon,
\label{eq:multivar_reg}
\end{equation}
where $\varepsilon$ is a centered random noise variable. Under this model, the theoretical molecular signature is the weighted average $\bN^T\alpha^*$ of the features in $\bN$. 

Under suitable regularity conditions, and for appropriate choices of the tuning parameter, the LASSO is variable-selection consistent, in the sense it correctly identifies the set of non-zero coefficients in $\alpha^*$ with probability tending to one as the sample size increases \cite{tibshirani_regression_1996,buhlmann_statistics_2011}. Given a training sample of $n$ observations, the LASSO consists in solving the optimization problem 
\begin{equation} \min_{\substack{\alpha_0\in\R, \alpha\in\R^{p_0}\\\theta\in \R^d}} \sum_{i=1}^n \left[E_i - (\alpha_0 + \bN_i^T\alpha + \bW_i^T\theta)\right]^2 + \lambda_n \|\alpha\|_1, \label{eq:lasso_unscreen}
\end{equation}
with $\|\alpha\|_1$ denoting the $L_1$-norm of vector $\alpha$ and $\lambda_n > 0 $ the tuning parameter. Although alternative strategies are possible, the approach presented above consists in not penalizing the parameters corresponding to the confounders $\bW$, ensuring that they are included in the model and properly adjusted for.



Under the screening strategy, the first step considers one molecular feature at a time, using linear models adjusted for confounders $\bW$, but not for the other molecular features. Specifically, for any $j = 1, \dots, p_0$, these models are of the form 
\begin{equation*}
E = \gamma_j^* + \delta_j^*\bN_j + \bW^T\rho_j^* + \epsilon_j,
\end{equation*}
where $\epsilon_j$ is a centered random noise variable and $\gamma_j^*\in\R$, $\delta_j^*\in\R$ and $\rho_j^*\in\R^d$ are the model parameters. For all $j=1, \dots, p_0$, let $H_0(j)$ be the null hypothesis for feature $j$, $H_0(j): \delta^*_j = 0$. Given a sample of size $n$, standard hypothesis tests (Wald or likelihood-ratio), followed by a multiple-testing correction, such as Bonferroni or the Benjamini-Hochberg false-discovery-rate procedure, can be used to derive the set ${\cal R}=\{j: H_0(j) \mathrm{\ is\  rejected}\}$. Then, with $\tilde \bN =\bN_{{\cal R}}$ the vector of features kept by the screening step, the second step of the screening strategy consists of estimating the (penalized) multivariate regression model, as in the no-screening strategy, except $\bN$ is replaced by $\tilde \bN$  (e.g., in \eqref{eq:multivar_reg} and \eqref{eq:lasso_unscreen} above).

\subsubsection{General results}
We can now state the following two general results, which allow the formal assessment of the asymptotic behavior of the no-screening and screening strategies, respectively, assuming they both rely on variable-selection consistent methods. 

\begin{theorem}\label{thm:no-screening}
Assume that some observed non-descendant $M$ of $E$ satisfies one of the two following conditions
\begin{itemize}
    \item[(C.1)] $G$ contains the path \(E \to \dots \to D \leftarrow M\) for some $D\in \mathrm{De}(E)$
    \item[(C.2)] $G$ contains the path \(E \to \dots \to D \leftarrow \nu \to M\) for some $D\in  \mathrm{De}(E)$ and $\nu \in U$
\end{itemize}
Then, under the no-screening strategy, the signature asymptotically includes the non-descendant $M$ whenever $D$ is observed, or has observed descendants, or has observed ancestors in $\mathrm{De}(E)$.
\end{theorem}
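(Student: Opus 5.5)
Under the no-screening strategy, a variable-selection consistent method asymptotically recovers the support of $\alpha^*$ in \eqref{eq:multivar_reg}, i.e. the population linear projection of $E$ on $(\bN,\bW)$. I will therefore reduce the statement to a conditional dependence claim. Let $Z=(\bN\setminus\{M\})\cup\bW$. The coefficient of $M$ in the projection is zero if and only if $E$ and $M$ are (partially) uncorrelated given $Z$. I will work in the setting where this coincides with conditional independence, namely a linear structural equation model with Gaussian or otherwise independent errors, which is implicit in the linear model \eqref{eq:multivar_reg}. Faithfulness, applied in its linear/partial-correlation version, then yields a nonzero coefficient whenever $E$ and $M$ are d-connected given $Z$. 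So the whole task is to show that, under (C.1) or (C.2) and the observability condition on $D$, some path between $E$ and $M$ is active given $Z$.

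To build the active path, I start from the directed path $E\to A_1\to\dots\to A_r=D$. Under (C.1) I append $D\leftarrow M$; under (C.2) I append $D\leftarrow\nu\to M$. I may take the directed part to be a shortest path of this form, and I will choose it carefully (see below) so that its intermediate nodes are unobserved. The path must then satisfy three conditions.
\begin{itemize}
\item $D$ is a collider, so it must be opened. This happens if $D\in\bN$, or if some descendant of $D$ lies in $\bN$ (neither is $M$, since $M$ is a non-descendant of $E$ while $D$'s descendants are descendants of $E$). This covers the first two cases of the statement.
\item $\nu$ is unobserved, so it is not in $Z$, and the fork at $\nu$ stays open.
\item Each intermediate node $A_i$ is a non-collider, so it must not be in $Z$. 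No $W_k$ lies on the path, because $W$ are non-descendants of $E$.
\end{itemize}

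The main obstacle is the third condition: the intermediate $A_i$ may be observed and thus block the chain. The third case of the hypothesis (an observed ancestor of $D$ inside $\mathrm{De}(E)$) has the same flavour. My approach is to re-select the collider rather than keep the given $D$. Take an observed node $O\in\mathrm{De}(E)\cap\bN$ with the following property: $O$ is $D$ itself, a descendant of $D$, or an ancestor of $D$ on the directed path from $E$. Among the observed nodes on the directed chain $E\to\dots\to D$, let $A^\star$ be the last one. If $A^\star$ exists, then $M$ is d-connected to $E$ given $Z$ through the segment from $A^\star$ onward. Indeed, the path $E\to\dots\to A^\star\to\dots\to D\leftarrow(\nu\to)M$ with $A^\star\in Z$ looks blocked. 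The fix is to use the first observed node on the directed path instead: let $A^\star$ be the first observed node on $E\to\dots\to D\to\dots\to O'$, where $O'$ is $D$ or its observed descendant. If $A^\star$ lies strictly before $D$, then I argue directly that conditioning on $A^\star$ still leaves the path $E\to\dots\to D\leftarrow M$ active. This requires $D$ to have $A^\star$-free access: $D$, a collider, is open because it has an observed descendant, or because the statement's phrase ``observed ancestors in $\mathrm{De}(E)$'' is read via the intermediate non-colliders being unobserved. I expect to resolve this by showing that conditioning on an observed ancestor $A^\star$ of $D$ makes $A^\star$ play the role of the collider's gateway. Concretely, $E$ and $M$ are dependent given $A^\star$ whenever $M\to D$ and $D$ is marginalised. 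In a linear SEM one can write $A^\star$-downstream quantities, and the collider at $D$ is activated through $A^\star$ only if $A^\star$ is a descendant of $D$. So for this case I would check the paper's intended reading, namely the toy example with $M_{19}\to M_{17}\leftarrow E$ and $M_{18}$ observed, and likely restrict to colliders whose observed relatives are descendants, handling ancestors via a chosen alternative collider.

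Finally, with an active path in hand, faithfulness gives $E\not\!\perp\!\!\!\perp M\mid Z$, hence $\alpha^*_M\neq 0$. Variable-selection consistency of the LASSO in \eqref{eq:lasso_unscreen} then implies that $M$ is selected with probability tending to one.
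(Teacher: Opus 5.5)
Your reduction matches the paper's route, and in one respect it is more careful. Variable-selection consistency targets the support of $\alpha^*$, and under \emph{linear} faithfulness $\alpha^*_M\neq 0$ exactly when $E$ and $M$ are d-connected given $Z=(\bN\setminus\{M\})\cup\bW$. The paper instead passes from conditional dependence to a nonzero regression coefficient without comment. Your three bullet conditions are also the right ones.

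The gap is the third condition. Nothing in (C.1) or (C.2) lets you choose the directed path $E\to A_1\to\cdots\to D$ with every $A_i\notin\bN$, and your repairs do not work:
\begin{itemize}
\item Any observed node $A^\star$ on that chain is a conditioned non-collider and blocks it, whether you take the first or the last one. So the claim that conditioning on $A^\star$ ``still leaves the path active'' is false.
\item An observed ancestor of $D$ can never act as a ``gateway'' for the collider. A collider is activated only by conditioning on it or on one of its descendants.
\item Re-selecting the collider does not help, because the hypotheses provide no other collider linking $M$ to $\mathrm{De}(E)$.
\end{itemize}

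This gap cannot be closed, because the statement is false in exactly these cases. Take the DAG $W_1\to E\to A\to D\leftarrow M$ with $A$ and $M$ observed. Whether or not $D$ is observed, the only path between $E$ and $M$ runs through the conditioned non-collider $A$. Hence $E\perp\!\!\!\perp M\mid(\bN\setminus\{M\},\bW)$ and $\alpha^*_M=0$, although (C.1) holds and $D$ is observed or has the observed ancestor $A\in\mathrm{De}(E)$.

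The paper's own toy example contains this configuration. $M_5$ satisfies (C.1) via $E\to M_1\to M_3\to M_4\leftarrow M_5$ with $M_4$ observed, yet $M_5\notin S_{{\rm no-s}}$, because $M_3$ is observed.

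The paper's proof covers this with two incorrect claims:
\begin{itemize}
\item that conditioning on ancestors of $D$ in $\mathrm{De}(E)$ d-opens the path;
\item that ``adding further variables to the conditioning set does not close the collider-opened path'', which fails whenever an added variable is a non-collider on that path.
\end{itemize}

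Your closing instinct is correct. Drop the ancestor clause, and assume a directed path $E\to A_1\to\cdots\to A_r\to D$ with $A_1,\dots,A_r\notin\bN$, together with $D\in\bN$ or some descendant of $D$ in $\bN$. Under that hypothesis your three bullets already give a complete proof, once you add three observations:
\begin{itemize}
\item $M\notin\mathrm{De}(D)$, so the observed descendant of $D$ lies in $Z$;
\item no $W_k$ lies on the path, by acyclicity;
\item $\nu\notin\mathrm{De}(E)$ (otherwise $M$ would be a descendant of $E$), so the path is simple.
\end{itemize}
You should stop there rather than attempt the $A^\star$ construction.
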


\begin{theorem}\label{thm:screening}
Under the screening strategy, the features selected in the signature asymptotically constitute a subset of $\mathrm{De}(E)$, and in particular include all observed children of $E$. 
\end{theorem}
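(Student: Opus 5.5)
The proof splits into two parts, one for each step of the screening strategy. First we identify, asymptotically, the set $\mathcal{R}$ retained by the univariate screening step. Then we study the multivariate regression of $E$ on $\tilde\bN=\bN_{\mathcal R}$ and $\bW$.

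Throughout, we interpret ``asymptotically'' in the usual sense. Tests based on consistent estimators, with a fixed-$p_0$ multiple-testing correction, reject $H_0(j)$ with probability tending to one when $\delta_j^*\neq 0$. They reject it with probability tending to zero (Bonferroni) or at most a vanishing proportion (for which we assume the correction level $\to 0$) when $\delta_j^*=0$. Likewise, the second-step LASSO is assumed variable-selection consistent. We also take the linear model to be well specified, so that a vanishing coefficient is equivalent to a conditional independence. Faithfulness links this to $d$-separation.

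\textbf{Step 1 (screening selects exactly the observed descendants).} The coefficient $\delta_j^*$ is zero if and only if $E \perp\!\!\!\perp N_j \mid \bW$. By the Markov and faithfulness assumptions, this holds if and only if $E$ and $N_j$ are $d$-separated by $\bW$.

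Suppose first that $N_j\in\mathrm{De}(E)$. The directed path $E\to\dots\to N_j$ consists of $E$ and elements of $\mathrm{De}(E)$, none of which lie in $\bW$. Indeed, $\bW$ contains causes of $E$, so acyclicity rules them out as descendants. This path has no colliders and no conditioned nodes, so it is open, and $N_j$ is retained.

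Now suppose $N_j\notin\mathrm{De}(E)$, and take any path between $E$ and $N_j$.
\begin{itemize}
\item If the path starts $E\leftarrow X$, then $X$ is a parent of $E$. Every parent of $E$ is a shared cause in $\bW$: the $\nu$'s are not causes of $E$ by definition, and no $M$'s cause $E$ under the paper's setting. So the path is blocked at the non-collider $X\in\bW$.
\item If the path starts $E\to\cdots$, it cannot be directed all the way to $N_j$, since $N_j$ is not a descendant of $E$. Hence it must contain a collider. The first collider $C$ is a descendant of $E$, as are all its descendants. None of these lie in $\bW$ (again by acyclicity), so the path is blocked at $C$.
\end{itemize}
Therefore $\mathcal R=\bN\cap\mathrm{De}(E)$ asymptotically.

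\textbf{Step 2 (the second step yields a subset containing the observed children).} The subset claim is immediate, because the LASSO can only select among the features in $\tilde\bN\subseteq\mathrm{De}(E)$.

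For the inclusion of children, let $M_i\in\mathrm{Ch}(E)$ be observed. We must show $E\not\perp\!\!\!\perp M_i\mid(\tilde\bN\setminus\{M_i\},\bW)$. The direct edge $E\to M_i$ is a path with no intermediate nodes, so it is never blocked by any conditioning set. Faithfulness then gives a nonzero coefficient $\alpha^*_i$, and selection consistency yields its inclusion.

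\textbf{Main obstacle.} The delicate point is the blocking argument for non-descendants in Step 1, specifically the claim that every parent of $E$ lies in $\bW$. This requires reading the paper's setup as stating that all causes of $E$ among the modelled variables are the observed confounders $W$, with no $M_j\to E$ edges. If unobserved causes of $E$ were allowed, a path $E\leftarrow\nu\to N_j$ would stay open, and the argument would fail. I would therefore state this assumption explicitly before invoking $d$-separation.
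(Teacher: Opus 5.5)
Your proof matches the paper's: screening keeps exactly the observed descendants, because $E$ and $N_j$ are $d$-connected given $\bW$ if and only if $N_j\in\mathrm{De}(E)$, and observed children survive the second step because the edge $E\to N_j$ cannot be blocked; your argument that the first collider on a path leaving $E$ is a descendant of $E$ (so neither it nor its descendants lie in $\bW$) supplies the justification the paper omits for its claim that any path containing a collider is closed, and the assumption you flag (all parents of $E$ lie in $\bW$) is exactly the one the paper invokes. One small correction to your asymptotic set-up: fixed-level Bonferroni does not make the rejection probability of a true null vanish (it tends to $\alpha/p_0$), so the level must tend to zero there as well, which the paper also leaves implicit.
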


Theorem \ref{thm:no-screening} formally states that the no-screening strategy typically suffers from collider bias. As a result, it asymptotically yields signatures that include non-descendants of $E$, in particular those satisfying conditions $(C.1)$ or $(C.2)$. In fact, additional non-descendants of $E$ may also be included. By arguments similar to those used in the proof of Theorem \ref{thm:no-screening}, it follows that if $M_1$ is selected by the no-screening strategy, then any feature $M_2$ connected to $M_1$ through a path $M_1 \leftarrow  M_2$ or $M_1 \leftarrow U \to M_2$ in $G$ would also be selected asymptotically. This ``recursive" property explains why $M_{15}$ is asymptotically selected by the no-screening strategy in the example of Figure \ref{fig:Dagtoy}. 

Conversely, Theorem \ref{thm:screening} states that the screening strategy asymptotically yields signatures that include only descendants of $E$. While paths of the form $E \to C \to D$ are blocked after adjustment for $C$, a descendant $D$ that is not a direct child may still be selected if, for example, $D$ is a cause of another child of $E$, $E \to C'  \leftarrow D$. For instance, in Figure \ref{fig:Dagtoy},  $M_3$ is a descendant but not a direct child of $E$; it would still be selected asymptotically under the screening strategy because of the path $E \to M_6 \leftarrow M_3$.

With the notation of Theorem \ref{thm:no-screening}, the non‑descendant features $M$ selected under the no‑screening strategy carry predictive information about $E$ after conditioning on $D$. In other words, a signature derived under the no‑screening strategy is expected to achieve better predictive accuracy than one obtained under the screening strategy (as confirmed by our empirical results below). However, the inclusion of such features undermines the interpretability of the signature when the goal is to infer biological processes causally influenced by $E$.

Another important remark is that, in many practical applications, cycles may be expected, in the sense that several molecular features can influence each other. Consider, for example, two molecular features that influence one another. The resulting cycle $M_1 \leftrightarrows M_2$ can be represented in a ``time-expanded" DAG by assuming that past levels of $M_1$ influence future levels of $M_2$, and vice versa, and possibly also that current levels of $M_1$ influence current levels of $M_2$ (or the reverse, but not both).  The time-expanded version of the DAG is then acyclic, as required, but includes unobserved features, since molecular features are typically measured at one particular time only. Nevertheless, Theorems \ref{thm:no-screening} and \ref{thm:screening} still apply. In the simple example shown in Figure \ref{fig:Dag_time_exp}, assuming only molecular features $M^{(1)}_1, M^{(1)}_2$ and $M^{(1)}_3$ are observed along with exposure $E$,  Theorems \ref{thm:no-screening} and \ref{thm:screening} together with d-separation arguments imply that, asymptotically, the no-screening strategies would select all three features (even though $M^{(1)}_3$ is not causally influenced by $E$), whereas the screening strategy would select only $M^{(1)}_1$ and $M^{(1)}_2$.

\begin{figure}
\centering
\begin{tikzpicture}[>=Latex]

\node (E) {E};

\node[right=1.5cm of E] (M01) {$M^{(0)}_1$};
\node[below=1.5cm of M01] (M02) {$M^{(0)}_2$};
\node[above=1.5cm of M01] (M03) {$M^{(0)}_3$};
\node[right=1.5cm of M01] (M11) {$M^{(1)}_1$};
\node[right=1.5cm of M02] (M12) {$M^{(1)}_2$};
\node[right=1.5cm of M03] (M13) {$M^{(1)}_3$};

\draw[->] (E) -- (M01);
\draw[->] (M01) -- (M11);
\draw[->] (M03) -- (M01);
\draw[->] (M01) -- (M02);
\draw[->] (M01) -- (M12);
\draw[->] (M13) -- (M11);
\draw[->] (M02) -- (M11);
\draw[->] (M02) -- (M12);
\draw[->] (M11) -- (M12);
\end{tikzpicture}\label{fig:Dag_time_exp}

\caption{DAG representing causal relationships among \((E,M_1,M_2, M_3)\) in a simple example where two features ($M_1$ and $M2$)  influence one another. The DAG shows features at times $t_0$ (superscript $(0)$), and $t_1>t_0$ (superscript $(1)$); typically only features at time $t_1$ are observed.}
\end{figure}

\subsubsection{Proof of Theorem \ref{thm:no-screening}}
Introduce $D'$ any descendant of $D$, or any ancestor of $D$ in $\mathrm{De}(E)$. 
From simple arguments based on $d$-separation, we have
\begin{align*}
E &\not\!\perp\!\!\!\perp M \,\big|\, (D, W) \\
E &\not\!\perp\!\!\!\perp M \,\big|\, (D', W) \\
E &\not\!\perp\!\!\!\perp M \,\big|\, (\bN\setminus\{M\}, W) \quad {\rm if}\ \{D\cup D'\} \cap \bN \neq \emptyset. 
\end{align*}
 
Indeed, since $M$ is a non-descendant of $E$, all paths between \(E\) and \(M\) pass either through a collider such as \(D\) or through one $W_j\in W$. All these paths are d-blocked by $W$, which implies that \(E \perp\!\!\!\perp M\ \,\big|\,  W\) under the Markov assumption. But, conditioning on \(D\) d-opens the paths \(E \to \dots \to D \leftarrow M\) and \(E \to \dots \to D \leftarrow \nu \to M\). Similarly, conditioning on any descendants of $D$ or any of its ancestors in $\mathrm{De}(E)$ d-opens these paths. Under our faithfulness assumption, this induces nonzero conditional dependence. In particular, \(E \not\!\perp\!\!\!\perp M \,\big|\, (D, W)\) and \(E \not\!\perp\!\!\!\perp M \,\big|\, (D', W)\). Adding further variables to the conditioning set does not close the collider-opened path. Hence \(E \not\!\perp\!\!\!\perp M \,\big|\, (\bN\setminus\{M\}, W)\) if $\{D\cup D'\} \cap \bN \neq \emptyset$. 

This result establishes that the coefficient of $M$ in $\alpha^*$ in \eqref{eq:multivar_reg} is non-zero, and any variable-consistent approach used for the derivation of the signature would asymptotically select $M$.  

\subsubsection{Proof of Theorem \ref{thm:screening}}


We first show that for any \(M \in \bM\),  
\begin{equation*}
E  \not\!\perp\!\!\!\perp M \mid W \quad\Longleftrightarrow\quad M \in {\rm De}(E), 
\end{equation*}
which establishes that $(i)$, the screening step asymptotically retains all and only the observed descendants of $E$ and then, $(ii)$, that the features selected with the screening strategy constitute a subset of the observed features in $\mathrm{De}(E)$. 

\begin{proof}
``\(\Rightarrow\)'' If \(E \not\!\perp\!\!\!\perp M \mid W\), the faithfulness assumption implies there is at least one open path between \(E\) and \(M\), that does not go through $W$. Because $E$ is not caused by the molecular features, the only parents of \(E\) are in $W$, and this open path must start with an exiting edge \(E\to\cdot\). Any path that would contain a collider would not be open. As a result, an open path must be a directed chain from \(E \to \dots \to M\), which establishes that  \(M \in {\rm De}(E)\).

``\(\Leftarrow\)'' If \(M\in \mathrm{De}(E)\), there is a directed path \(E\to \cdots \to M\). By faithfulness, \(E \not\!\perp\!\!\!\perp M \mid W\).
\end{proof}

Next, denoting by $\textrm{ch}(E)$ the set of observed features in $\textrm{Ch}(E)$, we show that for any set $S$ of observed features such that ${\rm ch}(E) \subseteq S \subseteq \bN$,
\[ N_j \in {\rm ch}(E) \Rightarrow E \not\!\perp\!\!\!\perp N_j \,\big|\, (S\setminus\{N_j\}, W), \]
which establishes that the parameter $\alpha^*_j$ in \eqref{eq:multivar_reg} is non-zero (under both the no-screening and screening strategies), and then that any variable-consistent approach used for the selection of features would asymptotically select all observed children of $E$. 

\begin{proof}

(2) Because $N_j\in \textrm{ch}(E)$, there is a direct edge \(E\to N_j\) in the DAG. This path is open, and, by faithfulness, remains open  when conditioning on any subset of variables that does not block the edge itself. This establishes that \(E \not\!\perp\!\!\!\perp N_j \mid (S\setminus\{N_j\}, W)\).
\end{proof}


\section{Simulation study} 
\label{sec:Simul}

\subsection{The example of Figure \ref{fig:Dagtoy}} \label{sec:dagtoysim}

We first present results from a simulation study based on the model in Figure \ref{fig:Dagtoy}, to illustrate the theoretical results of Theorems \ref{thm:no-screening} and \ref{thm:screening}.

\subsubsection{Data generation}\label{sec:sim:gendata:toyex}

Variables with no parents in the DAG (namely, $\nu_1, \nu_2, W_1, M_2, M_5, M_7, M_{15}, M_{19}$) were generated independently from a $\mathcal{N}(0,1)$ Gaussian distribution. For each variable with at least one parent, we denote by $\mathrm{Par}$ the set of its (direct) parents in the DAG. These variables (i.e., $E, M_1, M_3, M_4, M_6, M_8, M_9, M_{10}, M_{11}, M_{12}, M_{13}, M_{14}, M_{16}, M_{17}, M_{18}$) were generated independently from a  $\sim {\cal N}( \sum_{X\in {\rm Par}} X/2, 1/2)$ distribution. We considered sample sizes of $ n = 100 \times 5^k$, for $k=1, \dots, 5$ to assess how the no-screening and screening strategies behave as sample size tends to $\infty$. For each sample size, we generated $B = 100$ independent samples of $n$ observations.

\subsubsection{Signature construction and evaluation criteria}\label{sec:Simul:ConstrSign}
Signatures were constructed using a LASSO regression adjusted for $W_1$, as described in \eqref{eq:lasso_unscreen}. The model was implemented using the glmnet R package, with tuning parameter selected using the “one standard-error” rule \cite{friedman_regularization_2010}. Under the no-screening strategy, the LASSO used all available features as input, and we denote the resulting signature by $\hat S_{{\rm no-s}}$. Under the screening strategy, which yielded the signature $\hat S_{{\rm s}}$, it only used features significantly associated with exposure $E$, after adjustment for $W_1$, at 5\% false discovery rate, as described in Section \ref{sec:ScreenNoScreen}.

We compared the performance of the screening and no-screening strategies using two types of criteria. First, we assessed the sets of features $\hat S_{{\rm no-s}}$ and $\hat S_{{\rm s}}$ selected under the no-screening and screening strategies, respectively, and computed the proportion of samples over which \(\hat S_{{\rm no-s}} = S_{{\rm no-s}} = \{M_1, M_2, M_3, M_6, M_7, M_9, M_{11}, M_{14}, M_{15}, M_{18}, M_{19}\}\), and $ \hat S_{{\rm s}} = S_{\rm s} = \{M_1, M_3, M_6, M_{11}, M_{18}\}$. Second, using independent test sets of size $n_{\rm test} = 10,000$, we computed the correlation between the identified molecular signatures and the exposure $E$. 

\subsubsection{Results}

Figure \ref{fig:res:toyEx} summarizes our empirical results. As expected from Theorem \ref{thm:no-screening} and d-separation arguments, the probability that the features selected under the no‑screening strategy match the set $S_{{\rm no-s}} = \{M_1, M_2, M_3, M_6, M_7, M_9, M_{11}, M_{14}, M_{15}, M_{18}, M_{19}\}$ converged to 1 as $n\rightarrow \infty$. Similarly, our empirical results confirmed that the probability that the features selected under the screening strategy match the set $S_{{\rm s}} = \{M_1, M_3, M_6, M_{11}, M_{18}\}$ converges to 1 as $n\rightarrow \infty$, as expected from Theorem \ref{thm:screening} and d-separation arguments. Our results further showed that the signature $S_{{\rm no-s}}$ exhibited a stronger correlation with $E$ on independent test sets than $S_{{\rm s}}$. Finally, the bottom panel of Figure \ref{fig:res:toyEx} shows that the signatures generally include all features in $S_{{\rm no-s}}$ (for the no-screening strategy)  or $S_{{\rm s}}$ (for the screening strategy), but they also often included additional features, especially for small to moderate sample sizes, which is a well‑known behavior of the LASSO \cite{buhlmann_statistics_2011}.

\begin{figure} 
{\hspace{-50pt}}\includegraphics[scale=0.35]{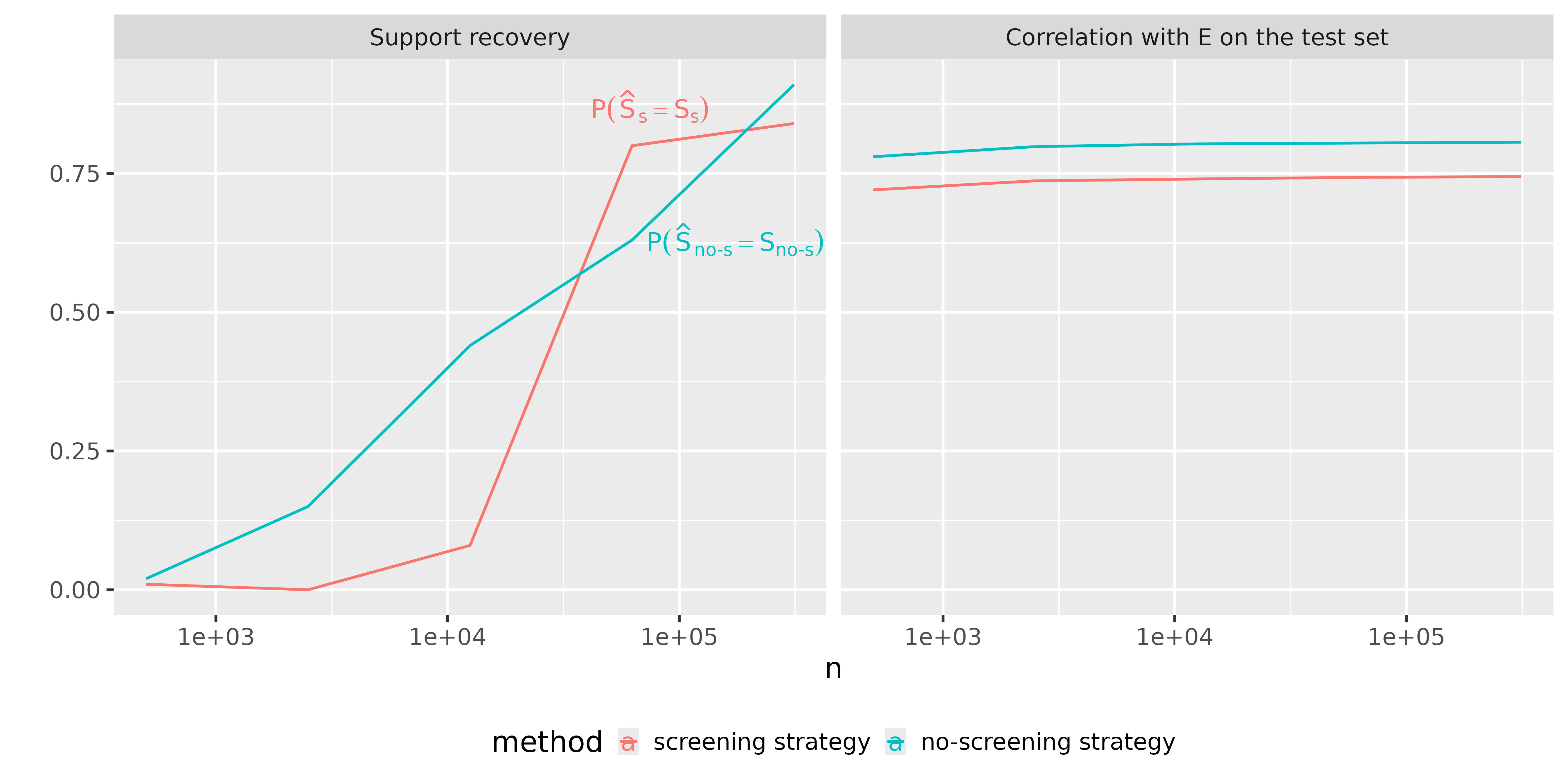}\\\vskip5pt
{\hspace{-50pt}}\includegraphics[scale=0.35]{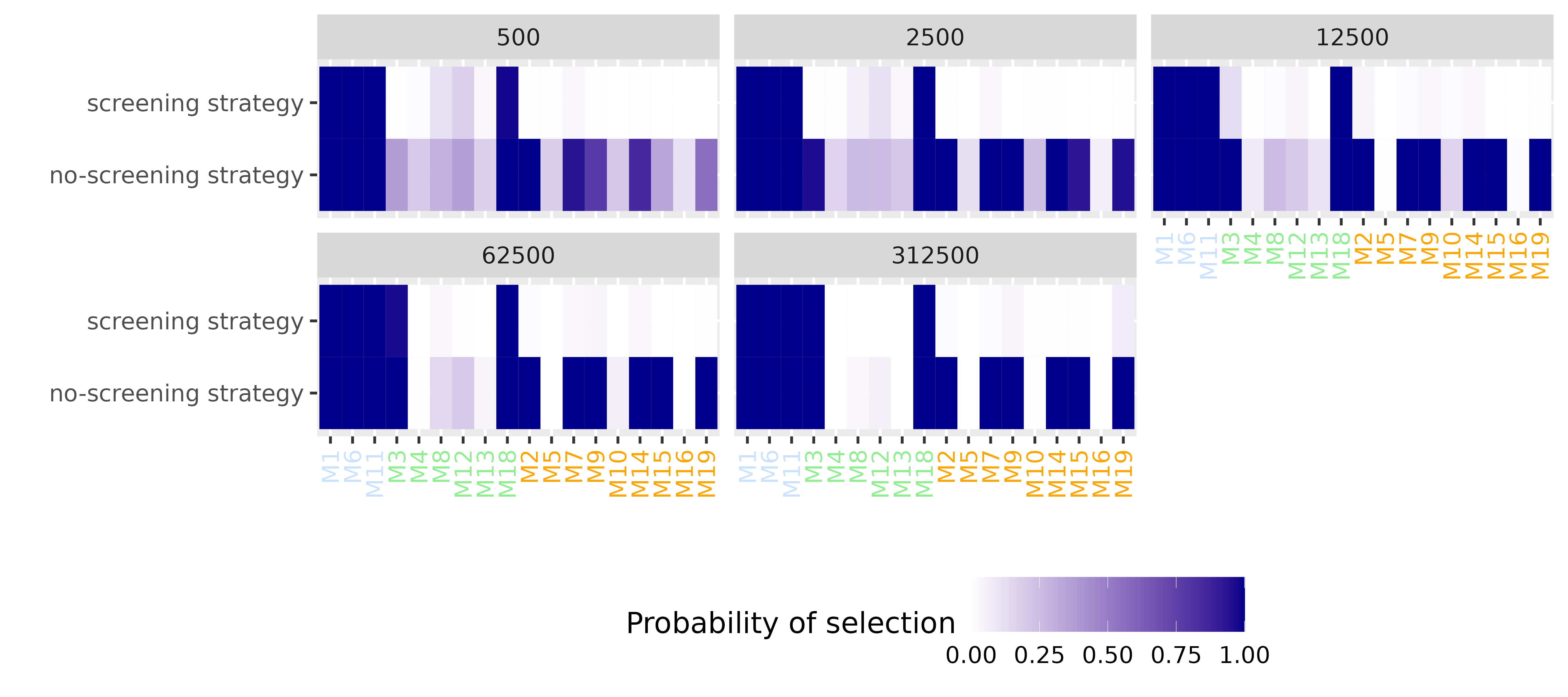}

\caption{Results of the simulation study for the toy example shown in Figure \ref{fig:Dagtoy}. The top-left panel displays, as a function of sample size, the proportion of replicas (out of 100) for which the set of features selected by the no‑screening strategy exactly matches $S_{{\rm no-s}} = \{M_1, M_2, M_3, M_6, M_7, M_9, M_{11}, M_{14}, M_{15}, M_{18}, M_{19}\}$ (cyan), and the proportion for which the features selected by the screening strategy exactly match $S_{\rm s} = \{M_1, M_3, M_6, M_{11}, M_{18}\}$ (red). The top-right panel reports the correlation (averaged over 100 replicas) between exposure $E$ and the resulting signature, evaluated on an independent test dataset of size $n_{\mathrm{test}}=10,000$, for the no‑screening (cyan) and screening (red) strategies. The bottom panel shows, for each feature, its selection probability across the 100 replicas at each sample size. For readability, features are ordered as children of $E$ (light blue), other descendants of $E$ (light green), and non‑descendants (orange).}
\label{fig:res:toyEx}
\end{figure}

\subsection{Additional scenarios}

We now consider additional scenarios to illustrate the finite‑sample performance of the no‑screening and screening strategies in settings with high‑dimensional molecular data, which are common in practice. We examine the two scenarios of Figure \ref{fig:dags} and a third, extended scenario corresponding to Figure \ref{fig:Dagtoy}. Scenario $(i)$ corresponds to the case where no collider is present, and where features are either descendants of $E$ or not associated with $E$ at all (neither marginally nor upon conditioning on other features). Scenario $(ii)$ corresponds to a case with colliders, where the DAG contains paths of the form $E \rightarrow M_k \leftarrow \nu \rightarrow M_\ell$. For simplicity, no confounders $W$ are included and all features are observed $(p=p_0)$. 

Scenario $(iii)$ is an extension of the structure presented in Section \ref{sec:dagtoysim} and Figure \ref{fig:Dagtoy}, designed to evaluate the strategies in a high-dimensional setting that may more closely mimic real-world omics data. While the initial toy example focused on a single set of 19 molecular features, this scenario scales the feature space to $p\in(1045, 2090)$ features by repeating the 19-feature block 55 or 110 times. In this setup, the exposure $E$ and the shared confounder $W_1$ are generated once and serve as the common ancestors for all 55 blocks of features. This scenario specifically tests the robustness of the screening and no-screening strategies when faced with a high-dimensional matrix containing a large number of potential colliders and redundant causal paths.

From Theorems \ref{thm:no-screening} and \ref{thm:screening}, together with d-separation, it follows that features selected using the screening strategy asymptotically form the set $S_{\rm s} = \{M_1, \dots, M_{p_{\rm child}}\}$ under scenarios $(i)$ and $(ii)$. In contrast, under the no-screening strategy, the selected features asymptotically form the set $S_{\rm no-s} = S_{\rm s} = \{M_1, \dots, M_{p_{\rm child}}\}$ under Scenario $(i)$, while they form the set of features $S_{\rm no-s} = \{M_1, \dots, M_{p_8}\} $ under Scenario $(ii)$ (see Figure \ref{fig:dags} for the definition of $p_8$). Under scenario $(iii)$, features selected using the screening strategy asymptotically form the generalized set $S_{\rm s} =\{M_{(b-1)\times19+p)}:b\in\{1,\dots,B\}, p\in \{1,3,6,11,18\}\}$, while under the no-screening strategy, the selected features asymptotically form the set $S_{\rm no-s} =\{M_{(b-1)\times19+p)}:b\in\{1,\dots,B\}, p\in \{1,2,3,6,7,9,11,14,15,18,19\}\}$ for $B \in \{55,110\}$ blocks.

\usetikzlibrary{arrows.meta, positioning}
\begin{figure}
    \centering
    \begin{tikzpicture}
    \tikzset{
        > = stealth,
        every path/.append style = {arrows = ->},
        node distance = 1cm and 1cm
    }
        \begin{scope}[scale=0.9, transform shape]
               \node (x1) at (0,0) {$E$};
            \node (m1) at (2,2) {$M_1$};
            \node at (2,1.25) {$\vdots$};
            \node (mprel) at (2,0.5) {$M_{p_{\rm child}}$};

            \node (mprelplus1) at (4,2) {$M_{p_{\rm child}+1}$};
            \node at (4,1.25) {$\vdots$};
            \node (m2prel) at (4,0.5) {$M_{2p_{\rm child}}$};    
            
            \node (m2prelplus1) at (2,-0.5) {$M_{2p_{\rm child}+1}$};
            \node at (2,-1.25) {$\vdots$};
            \node (mp) at (2,-2) {$M_{p}$};
            \path (x1) edge (m1);
            \path (x1) edge (mprel);
            \path (m1) edge (mprelplus1);
            \path (m1) edge (m2prel);
            \path (mprel) edge (mprelplus1);
            \path (mprel) edge (m2prel);
            \path (x1) edge (mprel);
            
            \node at (2,3) {Scenario $(i)$};
        \end{scope}

        \begin{scope}[xshift=5cm, scale=0.9, transform shape]
            \node (x1) at (0,0) {$E$};
            \node (m1) at (2,2) {$M_1$};
            \node at (2,0) {$\vdots$};
            \node (mpc) at (2,-2) {$M_{p_{\rm child}}$};

            \node (n0) at (3.5,0) {$\nu_0$};
            \node (n1) at (4,3) {$\nu_1$};
            \node at (4,2) {$\vdots$};
            \node (nB) at (4,1) {$\nu_D$};
            \node (n2) at (4,-1) {$\nu_{D_1}$}; 
            \node at (4,-2) {$\vdots$};
            \node (n3) at (4,-3) {$\nu_{D_2}$}; 
            \node (mp1) at (6,3.5) {$M_{p_{\rm child}+1}$};
            \node at (6,3.1) {$\vdots$};
            \node (mp2) at (6,2.5) {$M_{p_2}$};
            \node (mp3) at (6,1.5) {$M_{p_3}$};
            \node at (6,1.1) {$\vdots$};
            \node (mp4) at (6,0.5) {$M_{p_4}$};
            \node (mp5) at (6,-0.5) {$M_{p_5}$}; 
            \node at (6,-0.9) {$\vdots$};
            \node (mp6) at (6,-1.5) {$M_{p_6}$}; 
            \node (mp7) at (6,-2.5) {$M_{p_7}$};
            \node at (6,-2.9) {$\vdots$};
            \node (mp8) at (6,-3.5) {$M_{p_8}$};

            \node (mp9) at (2,-3) {$M_{p_8 + 1}$};
            \node at (2,-3.4) {$\vdots$};
            \node (mp) at (2,-4) {$M_{p}$};
            
            \path (x1) edge (m1);
            \path (x1) edge (mpc);

            \path (n0) edge (m1);
            \path (n0) edge (mpc);
            
            \path (n1) edge (m1);
            \path (n1) edge (mp1);
            \path (n1) edge (mp2);
            \path (nB) edge (m1);
            \path (nB) edge (mp3);
            \path (nB) edge (mp4);
            \path (n2) edge (mpc);
            \path (n2) edge (mp5);
            \path (n2) edge (mp6);
            \path (n3) edge (mpc);
            \path (n3) edge (mp8);
            \path (n3) edge (mp7);
            \node at (2,3) {Scenario $(ii)$};
        \end{scope}

    \end{tikzpicture}

    \caption{Directed acyclic graphs illustrating the additional simulation scenarios considered to evaluate the performance of feature selection strategies in the presence of different dependence structures among features: $(i)$ a scenario without colliders, where some features are descendants of $E$ and others are not associated with $E$ at all (neither marginally nor upon conditioning on other features); $(ii)$ A complex dependency structure where exposure-related features (``children", i.e. $M_1, \dots, M_{p_{\rm child}}$) and non-descendant features (``step-brothers" $M_{p_{\rm child}+1}, \dots, M_{p_{8}}$) share common latent causes (``mothers" $\nu_1, \ldots, \nu_{D_2}$). This design introduces collider paths ($E$→Child←Mother→Step-brother).}
    \label{fig:dags}
\end{figure}

\begin{figure}
\centering
\includegraphics[width=0.9\textwidth]{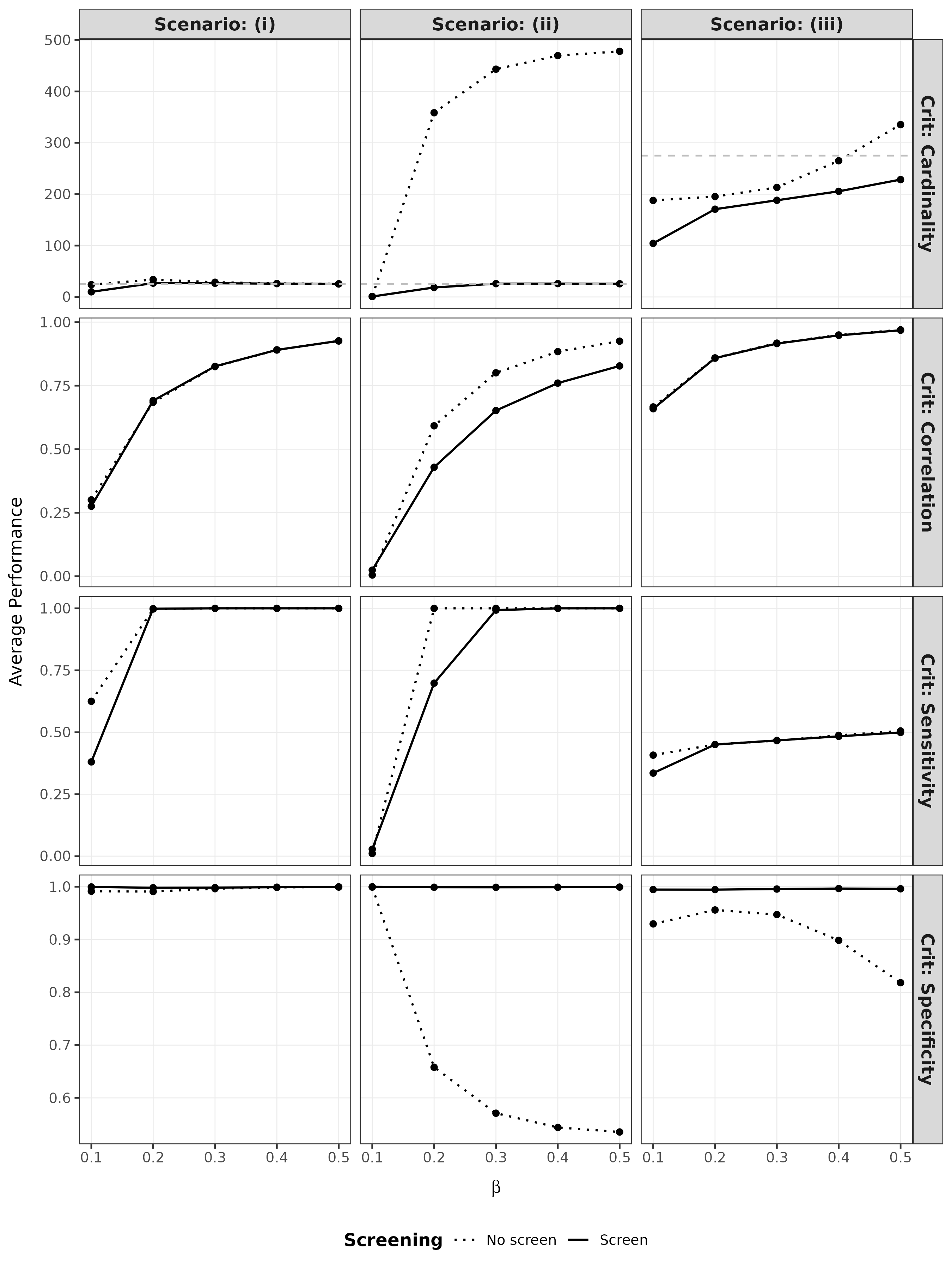}
\caption{Simulation results comparing feature selection strategies across three scenarios ($n=1000$). The grey dashed line in the first row represents the true number of (observed) descendants of the exposure.
\textbf{Scenario $(i)$}: $p = 1000$; The exposure was directly associated with the first 25 features, while the second 25 features were associated with the first 25 features (linear regression coefficients uniformly distributed between 0.005-0.01). \textbf{Scenario $(ii)$}: $p = 1000$; The exposure was associated with 25 features, with $D=5$ mothers and $d=5$ stepbrothers per mother. \textbf{Scenario $(iii)$}: $p = 1054$; The features follow 55 repeated blocks of the DAG in Figure \ref{fig:Dagtoy}, including 495 observed causal descendants (9 per block), multiple non-descendants, and an observed shared confounder $W_1$. Under the screening strategy, 275 features would be asymptotically selected.}\label{fig2}
\end{figure}

\subsubsection{Data generation}

For scenarios $(i)$ and $(ii)$, we simulated data for $n=1,000$ individuals with information on exposure $E$ and molecular features $M_1, \ldots, M_{p}$, with $p=1,000$. The exposure $E$ followed a ${\cal N}(0,1)$ Gaussian distribution throughout all simulations. Denoting by $p_{\rm child}$ the number of features influenced by $E$, and by $\beta$ the effect size, features were generated as follows. We varied the number $p_{\rm child}$ of features directly influenced by $E$ in $\{5, 25, 125\}$. 

Under scenario $(i)$, we set $M_j = \beta  E  + \epsilon_j$, for $j=1,\ldots, p_{\rm child}$; $M_j = \sum_{k=1}^{p_{\rm child}} u_{k,j}M_k + \epsilon_j$ for $j=p_{\rm child}+1, \ldots, 2p_{\rm child} $; and $M_j = \epsilon_j$ for $j= 2p_{\rm child}+1, \ldots, 1000$. Parameters $u_{k,j}$, linking features $M_1, \ldots, M_{p_{\rm child}}$ to features $M_{p_{\rm child}+1}, \ldots, M_{2p_{\rm child}}$, were drawn independently from a Uniform [0.005, 0.01]-distribution, and the $\epsilon_j$'s were generated independently from a ${\cal N}(0,1)$ Gaussian distribution. 

Under Scenario $(ii)$, let $D$ denote the number of ``mothers'' per child (e.g., $\nu_1, \dots, \nu_D$ for $M_1$ in Figure \ref{fig:dags}), which link a given child to its ``step-brothers'' (e.g., features $M_{p_{\rm child}+1}, \dots, M_{p_2}$ for $M_1$ in Figure \ref{fig:dags}). For $j\in \{1, \dots, p_{\rm child}\}$ 
let ${\cal M}_j = \{k\in \mathbb{N}:(j-1)D<k\leq jD\}$ such that $(\nu_k)_{k\in {\cal M}_j}$ are the mothers of $M_j$.
Finally let $d$ denote the number of step-brothers per mother. We first generated latent variables $\nu_0, \nu_1, \ldots, \nu_{Dp_{\rm child}}$ under independent ${\cal N}(0, 1)$ Gaussian distributions. Then, we set $M_j = \beta  E + \delta_g \nu_0 + \sum_{k \in {\cal M}_j}  \delta \nu_{k} + \epsilon_j$, for $j=1, \dots, p_{\rm child}$; $M_j = \delta \nu_{k_j} + \epsilon_j$ for $j=p_{\rm child}+1, \dots, p_{\rm child}(1 + Dd)$, with 
$k_j = \lfloor\frac{j-p_{\rm child}-1}{d} + 1\rfloor$.
and $M_j = \epsilon_j$ for $j= p_{\rm child}(1 + Dd) + 1, \ldots, 1000$. The $\epsilon_j$'s were generated independently from ${\cal N}(0,0.25)$ Gaussian distributions for $j=1, \dots, p_{\rm child}(1 + Dd)$ and from ${\cal N}(0,1)$ Gaussian distributions for $j=p_{\rm child}(1 + Dd) + 1, \ldots, 1000$. We set $\delta = 0.5$ in all experiments, and varied the number $D$ of mothers per child and the number $d$ of step-brothers per mother, as well as parameter $\delta_g$.

For scenario $(iii)$, we simulated data for $n=1000$ and $p=1045$, as well as for for $ n = 100 \times 5^k$, for $k\in\{1,2, 4\}$,  individuals and $p \in \{1045, 2090\}$, based on the DAG structure described in section \ref{sec:dagtoysim}. To adapt this structure for a high-dimensional setting, the shared confounder $W_1$ and exposure $E$ were generated once for each individual, such that $W_1 \sim {\cal N}(0,1)$ , and $E \sim {\cal N}( \beta W_1, 0.5)$. The 19-feature molecular blocks $\{M_{19(b-1) + 1}, \dots, M_{19b}\}$, for $b=1,\dots B$ (with $B=55$ or $B=110$) were then generated repeatedly, following the same approach as for features $M_1, \dots M_{19}$ in Section \ref{sec:sim:gendata:toyex}. 

We varied the signal strength $\beta$ in $\{0.1, 0.2, 0.3, 0.4, 0.5\}$ for all scenarios. 

\subsubsection{Signature construction and evaluation criteria}\label{perf_metrics}

Signatures were constructed under the no-screening and screening strategies, as described in Section \ref{sec:Simul:ConstrSign} above. Performance of the two strategies was assessed using four criteria: (1) the cardinality of the identified signature $\hat S$ (i.e., the number of features in $\hat S$); (2) the correlation between $\hat S$ and exposure $E$, assessed on an independent test set of size 10,000; (3) the ``sensitivity", representing the ability of the signature to select the asymptotic set $S_{\rm s}$ ($\P(M_j \in  \hat S | M_j \in S_{\rm s})$); and (4) the ``specificity'', here representing the ability to exclude non-descendants of $E$ ($\P(M_j \notin \hat S | M_j \notin {\rm De}(E))$).

\subsubsection{Main results} \label{sec:Results}
Figure \ref{fig2} presents the average performance of the evaluation criteria over 100 replications for each scenario and $\beta$ value. Results are shown for the case where $p_{\rm child}=25$ for scenarios $(i)$ and $(ii)$, and $n = 1000, p=1045$ for scenario $(iii)$. Additional results for other values of $p_{\rm child} \in \{5,25,125\}$ for scenarios $(i)$ and $(ii)$ and other values of $n$ and $p$ for scenario $(iii)$ are presented in the Supplementary Material. Overall, the findings were consistent with our theoretical expectations and highlight a trade-off induced by the screening step.
In scenario $(i)$, where collider bias is absent, screening has little to no effect (Figure \ref{fig2}). 
Under scenarios $(ii)$ and $(iii)$, where collider bias is present, the no-screening strategy produced signatures with low specificity, and whose cardinality often substantially exceeds the number of descendants. The screening step effectively mitigated this effect, albeit at the potential cost of reduced correlation (under scenario $(ii)$) and reduced sensitivity (for $\beta=0.2$ under scenario $(ii)$ and $\beta=0.1$ under scenario $(iii)$).  

Additional results presented in the Supplementary Material confirmed the patterns observed on the main results. In scenario $(i)$ (Figure \ref{supp_fig_scen1}), there is little effect of screening across all values of $p_{\rm child}$. In scenario $(ii)$, the screening step improved specificity, but at the possible cost of reduced sensitivity (for low signal strengths) and reduced correlation with the exposure, across most tested values of $p_{\rm child}$, $D$, $d$ and $\delta_g$ (Figure \ref{fig2}). Notably, for some parameter configurations (especially when $\delta_g = 0.2$), collider bias was limited and the no-screening strategy already produced specific signatures. In scenario $(iii)$, the no-screening strategy produced signatures with low specificity, especially for large enough $n/p$ ratio and/or signal strength, while the screening strategy corrected for this limitation at the cost of reduced sensitivity and correlation with the exposure (e.g., for low signal strength in the $n=500$ setting) (Figure \ref{fig2}.

\section{Discussion} \label{sec:Disc}

Signatures of lifestyle exposures are often constructed using multivariate linear models, yet the application of an initial univariate screening step is applied inconsistently across studies and without formal guidance on their causal implications. In this article, we develop both formal and empirical arguments to evaluate the respective merits of the no-screening and screening strategies  used in the literature for the construction of molecular signatures of lifestyle exposures. The first approach uses all available features, whereas the second applies an initial ``univariate'' screening step to focus the construction of the signatures on features ``marginally'' associated with the exposure (considering one feature at a time, but adjusting for possible confounders).  

Using d-separation arguments, we showed that a particular form of collider bias may arise when constructing molecular signatures of lifestyle exposures. When present, this collider bias leads the no‑screening strategy to include non‑causal features in the signature, whereas this does not occur under the screening strategy. Throughout this article, we took the example of lifestyle exposures, but importantly, this collider bias may arise when constructing molecular signatures of any other phenotype that causally influences the molecular features. Conversely, this collider bias does not arise if the studied phenotype is causally influenced by the molecular features. For example, molecular signatures have been developed as predictive biomarkers of disease risk, such as for cancer \cite{hu_metabolomic_2024,wilechansky_prediagnostic_2025, li_prediagnostic_2024}. The molecular features used to derive disease-risk signatures are typically pre-diagnostic and considered causes (or proxies for causes) of the disease being studied, rather than their consequences. In such settings, the collider bias we describe in this work would not occur, and the screening step would not be needed to avoid the inclusion of non-causal features in the signature.  

Focusing on molecular signatures of lifestyle exposures (or, more generally, phenotypes that causally influence molecular features), our results indicate that the screening step consistently improves specificity. In some settings, however, this gain comes at the cost of reduced sensitivity and weaker associations with the exposure.

In our experiments, the observed reduction in sensitivity was usually not driven by failures of the initial screening stage to retain features causally affected by the exposure, but rather by how screening constrains the subsequent regression used for selection. The coefficient vector $\boldsymbol\alpha$ of the regression of $E$ on the features depends on the causal effects of $E$ on the features, but also on the covariance matrix of the features. When screening is omitted and the regression is performed on all features, the inclusion of many correlated but non‑causal features can inflate the magnitude of $\boldsymbol\alpha$ for the causal features, thereby increasing their probability of being selected in the signature. By contrast, screening limits this correlation‑induced amplification, which may reduce sensitivity when causal effects are moderate or weak (Figures \ref{supp_fig_scen2_dg0} and \ref{fig:supp_scen3_allcrit}); a detailed explanation is provided in the Supplementary Material \ref{appendix:dependence_sensitvity}.



The weaker association with the exposure for signatures derived through the screening strategy reflects the general fact that non-causal features can, in some settings, improve predictive performance. In our experiments, this attenuation was primarily observed under some settings of scenario $(ii)$, regardless of the signal strength, as well as at low signal strength under scenario $(iii)$ (e.g., $\beta = 0.1$ in Figure \ref{fig:supp_scen3_allcrit}). 

Overall, the choice to apply univariate screening depends on whether the objective is etiologic interpretation or prediction. If the goal is to capture metabolic responses to the exposure, screening is advantageous as it enhances biological relevance, and the associated loss in sensitivity is not systematic. In contrast, when the signature is intended to serve as a biomarker, the trade-off becomes less straightforward, as predictive performance may take precedence over specificity. On the one hand, clinical biomarkers are typically expected to be specific to the exposure, supporting the use of screening to exclude non-causal features \cite{cuparencu_towards_2024}. On the other hand, non-causal features may improve predictive accuracy, at least within the population under study. Although it has been argued that such signatures may generalize poorly to external populations \cite{s_selecting_2023}, this concern has recently been challenged in the machine learning literature \cite{nastl_causal_2024}. Consequently, when the objective is prediction, the screening step may not be necessary.

Similar trade-offs may arise when deciding whether to adjust for correlated exposures or other confounders during signature construction. For example, in the toy example of Figure \ref{fig:Dagtoy}, feature $M_{16}$ would be selected in the signature if confounder $W_1$ was not adjusted for. Published studies on molecular signatures differed in whether they adjusted for potential confounders \cite{zhu_proteomic_2025,wan_plasma_2025,smith_healthy_2022}. An example illustrating the importance of adjusting for confounders comes from a study of metabolic signatures of alcohol and smoking, two highly correlated exposures \cite{assi_are_2018}. Ignoring adjustment for the alternate exposure, the respective signatures shared four of seven features, while no features were shared after adjustment for the alternate exposure. This highlights how confounding can obscure the specificity of signatures. 

Several limitations should be considered when interpreting these findings. Our simulations relied on simple causal structures, linear relationships, and lasso-based signature construction. Real molecular systems may exhibit nonlinear effects, feedback loops, heterogeneous correlation structures, and substantially higher dimensionality. An additional consideration not explored here is measurement error in the exposure, which is common in epidemiologic studies of lifestyle factors. Such error may reduce sensitivity by attenuating true associations below the screening threshold, and may also reduce specificity through residual confounding.


To recap, we demonstrated that the application of univariate screening should be carefully considered when constructing molecular signatures of a phenotype. When the studied phenotype can be seen as a cause of the available molecular features, our results indicate that univariate screening is recommended when investigating causal mechanisms of exposures through molecular signatures. While it could be preferable to ignore this step when the signature is intended to be used as a predictive biomarker, our simulations show that univariate screening may retain similar performance to its omission. Our work provides a foundation for methodological choices in the construction of molecular signatures, particularly as they continue to be used in epidemiology.

\bigskip

\bibliography{Signatures_WuViallon}

\newpage
\begin{appendices}

\section{Additional results}\label{secA1}


\subsection{Results of additional figures}
Figure \ref{supp_fig_scen1} presents results for scenario $(i)$ varying the number of children and the strength of the signal. The findings of this scenario are consistent with our theoretical expectations, with little difference in cardinality, correlation, sensitivity, and specificity across the screening and no-screening strategies. 

Figure \ref{supp_fig_scen2_dg0} presents results for scenario $(ii)$ with $\delta _g =0$, varying the number of children and corresponding mothers and step-brothers, as well as the strength of the signal. The specificity was 1 for the screening strategy across all scenario, while the no-screening strategy generally achieved a lower specificity, but a larger association with the exposure and larger sensitivity (especially for lower signals). The exception is for a higher number of mothers than step-brothers per child when $p_{\rm child}=5$ and $\beta \leq 0.4$, where there was little difference in all four metrics.

Figure \ref{fig:supp_fig_scen2_dg0.2} presents results for scenario $(ii)$ with $\delta _g =0.2$, varying the number of children and corresponding mothers and step-brothers, as well as the strength of the signal. Across most scenarios, the screening and no-screening strategies performed similarly in terms of cardinality, association with the exposure, sensitivity, and specificity. For some scenario (e.g., $(p_{\rm child}=5, D=7, d=25)$ and $p_{\rm child}=25, D=5, d=7$), the no-screening strategy yielded a decreased specificity (and increased cardinality) for larger signal strengths. The no-screening strategy also led to a larger association with the exposure for larger signal strengths in the $(p_{\rm child}=5, D=7, d=25)$ setting.

Figure \ref{fig:supp_scen3_allcrit} presents the performance under scenario $(iii)$  (repeated blocks of 19 features as in Figure \ref{fig:Dagtoy}) across varying sample sizes, numbers of features and signal strengths. For all sample sizes and number of features, the screening and no-screening strategy achieved similar sensitivity and association with the exposure, except for the lowest signal strength ($beta=0.1$), where the no-screening strategy performed slightly better. As sample size and $\beta$ increase, the no-screening strategy exhibits low specificity (and large cardinality), whereas the screening strategy maintained a perfect specificity. As a complement, Figure \ref{fig:supp_scen3_selfreq} presents the proportions of  selected children, descendants, and non-descendants under the same scenarios as in Figure \ref{fig:Dagtoy}. Selection frequencies confirm that the no-screening strategy increasingly includes non-descendant features as sample size and $\beta$ grow, reflecting collider-induced associations. By contrast, the screening strategy concentrated selection on children and descendants of the exposure, resulting in substantially higher specificity.

\begin{figure}[ht]
\centering
\includegraphics[width=0.9\textwidth]{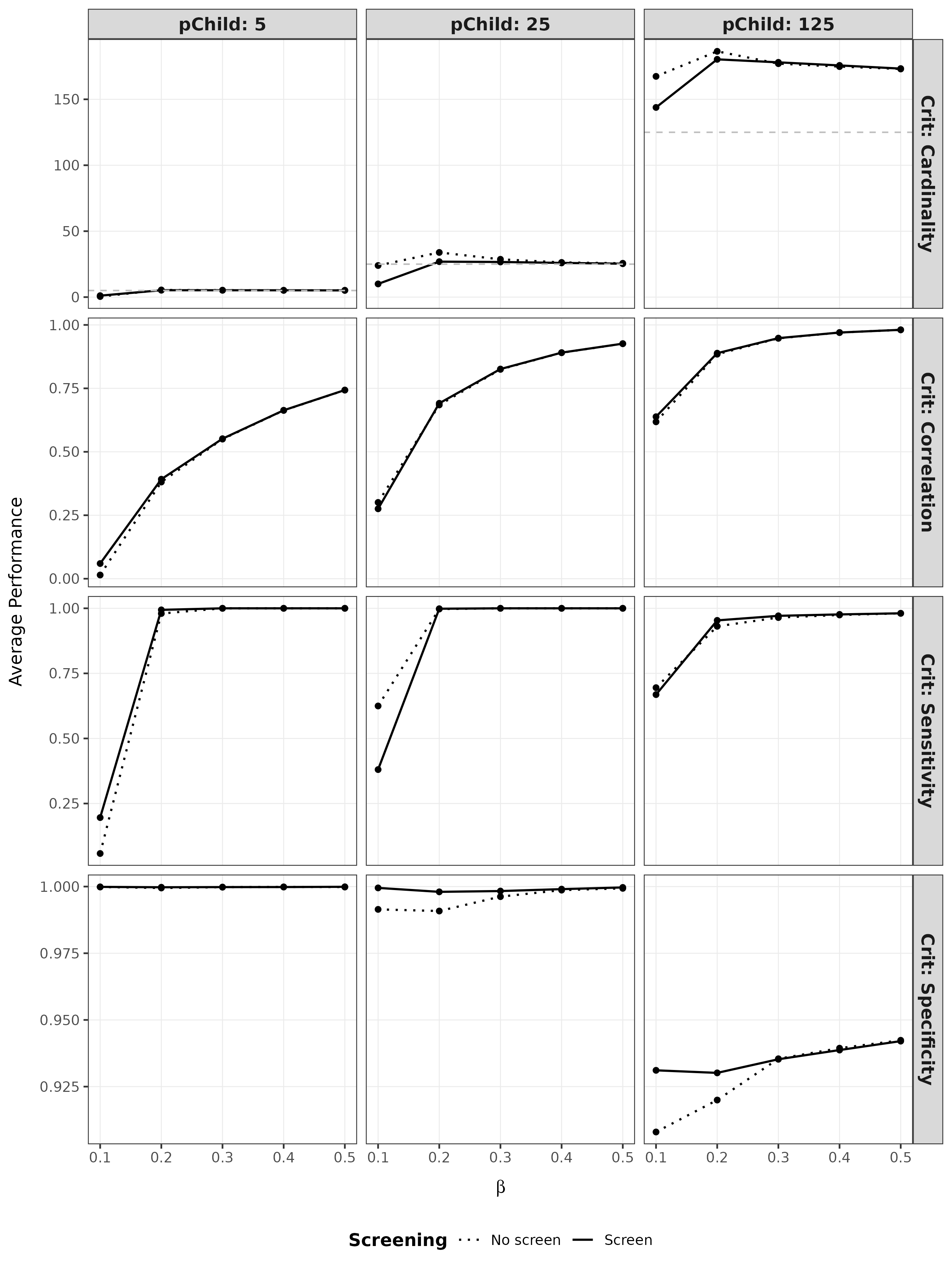}
\caption{Simulation results comparing feature selection strategies across $p_{child}\in 5,25,125$ for scenario $(i)$. Each simulation included 1,000 individuals and 1,000 molecular features, with one exposure variable, and $p_{child}$ related features. Performance was evaluated using four metrics: (row 1) correlation between the exposure and the selected feature signature, (row 2) overall sensitivity, (row 3) sensitivity to exposure-related latent variables, and (row 4) specificity. The grey dashed line in the first row represents the true number of related features.}\label{supp_fig_scen1}
\end{figure}

\begin{figure}[ht]
\centering
\includegraphics[width=0.9\textwidth]{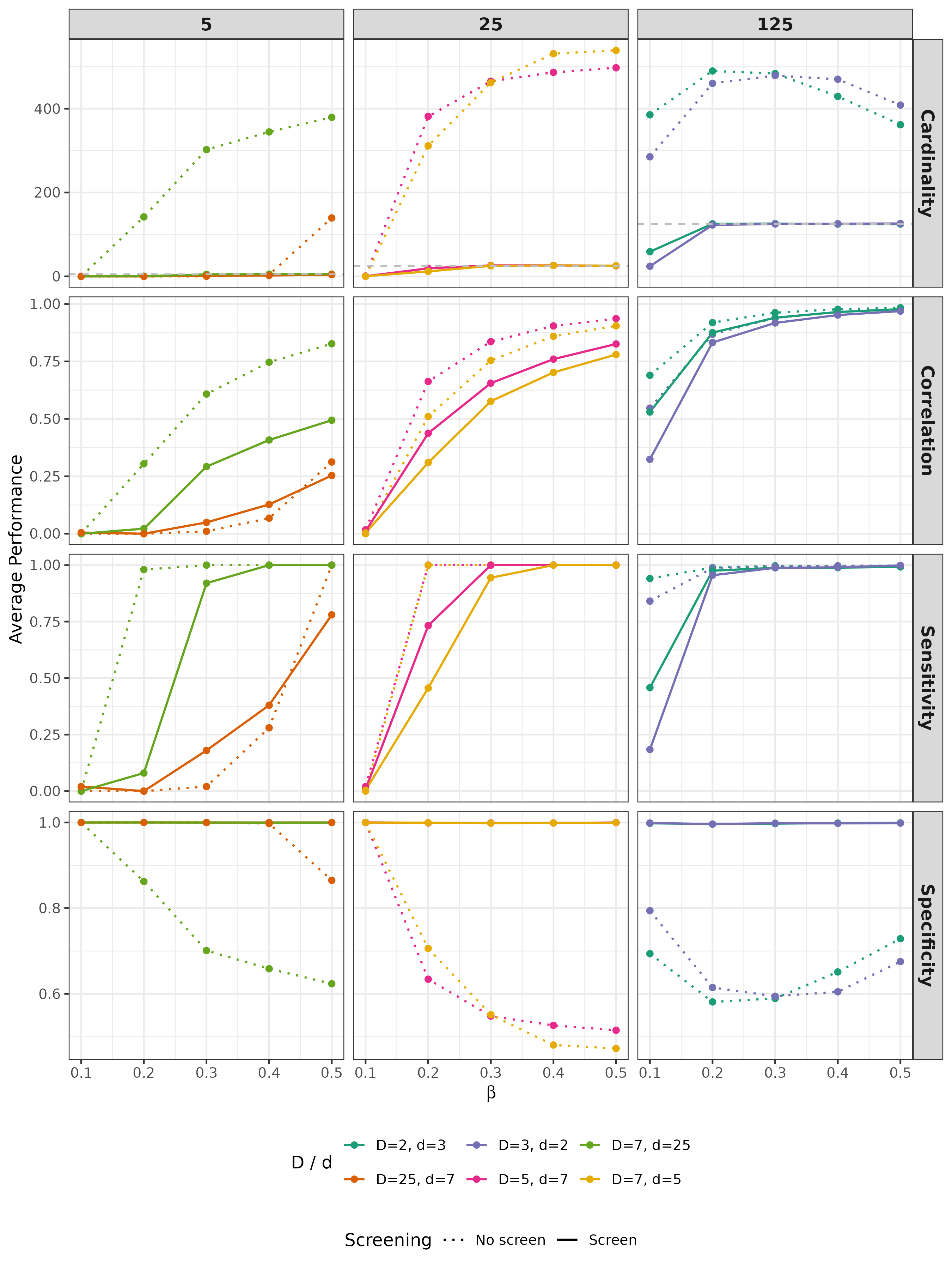}
\caption{Simulation results comparing feature selection strategies across $p_{child}\in5,25,125$ and $\delta _g =0$ for scenario $(ii)$. Each simulation included 1,000 individuals and 1,000 molecular features, with one exposure variable, and $p_{child}$ related features. Performance was evaluated using four metrics: (row 1) correlation between the exposure and the selected feature signature, (row 2) overall sensitivity, (row 3) sensitivity to exposure-related latent variables, and (row 4) specificity. The grey dashed line in the first row represents the true number of related features.}\label{supp_fig_scen2_dg0}
\end{figure}

\subsection{Dependence of sensitivity on the covariance structure of the features} \label{appendix:dependence_sensitvity}
In some settings, we observed that the screening strategy could lead to a reduction in sensitivity. This phenomenon, which was particularly apparent in scenario $(ii)$ with $p_{\rm child} = 5$, $D=7$ and $d=25$ in Figure \ref{supp_fig_scen2_dg0}, can be explained as follows. 

Under the linear structural equation models considered in our experiments (and ignoring intercept terms for simplicity), it follows from simple algebra that the coefficient vector of the regression of $E$ on $\bM_{S}$, for any subset of features $\bM_{S} \subseteq \bM$, is given by $$\boldsymbol\alpha_S = \mathrm{Var}(E)\,\Sigma_S^{-1}\boldsymbol\beta_S,$$ where $\Sigma_S$ denotes the covariance matrix of $\bM_{S}$, and $\boldsymbol{\beta}_S$ denotes the vector of causal effects of $E$ on $\bM_{S}$. As a consequence, the magnitude of the components of $\boldsymbol\alpha_S$, which directly drives lasso selection when constructing the signature, depends critically on the covariance structure of $\bM_{S}$.

If screening performs perfectly, $\bM_{S}$ comprises all and only the causal features (i.e., children and descendants of $E$), and the resulting coefficients $\boldsymbol\alpha_S$ can be viewed as ``oracle'' values for the regression‑based procedure. By contrast, when screening is omitted and the regression is performed on a much larger set of features, the inclusion of many correlated but non‑causal features alters $\Sigma_S$ in a way that can substantially increase the magnitude of $\boldsymbol\alpha_S$ for the causal features, even though the underlying causal effects $\boldsymbol\beta_S$ remain unchanged. This correlation‑induced amplification increases the probability that causal features are selected by the lasso. Although the lasso typically selects only a subset of the non‑causal features, this partial inclusion is sufficient to induce the amplification effect. Consequently, screening may appear to reduce sensitivity, not because it removes causal features, but because it prevents an artificial inflation of the regression coefficients used for selection.

For illustration, consider the setting with $\delta_g=0, p_{\rm child} = 5$, $D=7$, $d=25$ and $\beta = 0.2$  under scenario $(ii)$. When $\bM_S$ comprises all children of $E$ and no other features, the components of $\boldsymbol{\alpha}_S$ are all approximately $0.0125$ (based on estimation using $10^5$ observations). When $\bM_S$ comprises all children as well as one additional step-brother per child, the components of $\boldsymbol{\alpha}_S$ corresponding to the children increase to $0.06$ (while those corresponding to non-descendants are around $ -0.04$). Finally, when $\bM_S$ comprise all 1,000 features,  the components of $\boldsymbol{\alpha}$ corresponding to the children of $E$ reach $0.95$, while those corresponding to non-descendants are approximately $-0.03$. 
This illustrates that the no-screening strategy, while detrimental for specificity, may induce a geometry of the regression problem that increases the probability of including causal features in the signature under certain correlation structures.

\begin{figure}[ht]
\centering
\includegraphics[width=0.9\textwidth]{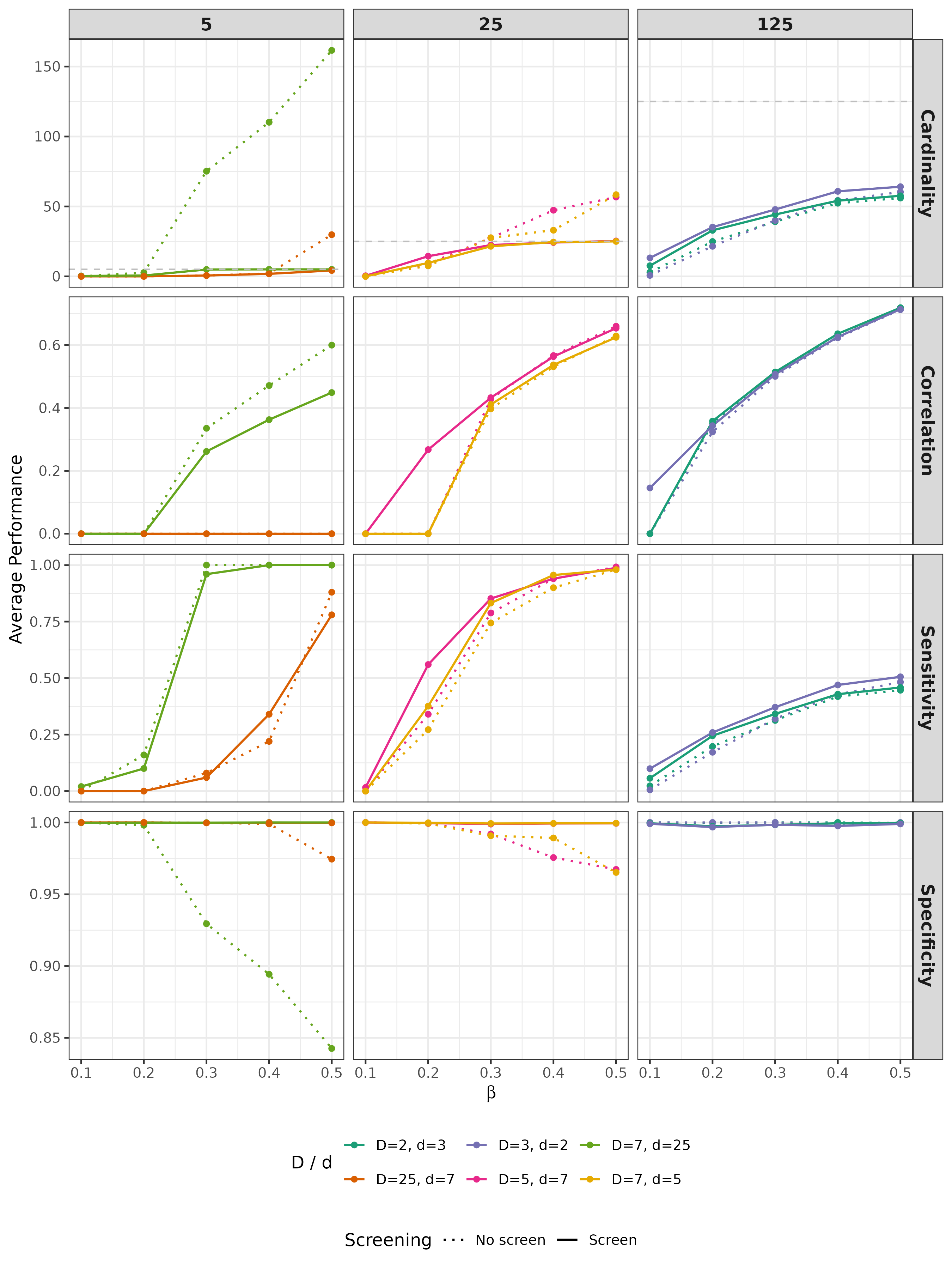}
\caption{Simulation results comparing feature selection strategies across $p_{child}\in5,25,125$ and $\delta _g =0.2$ for scenario $(ii)$. Each simulation included 1,000 individuals and 1,000 molecular features, with one exposure variable, and $p_{child}$ related features. Performance was evaluated using four metrics: (row 1) correlation between the exposure and the selected feature signature, (row 2) overall sensitivity, (row 3) sensitivity to exposure-related latent variables, and (row 4) specificity. The grey dashed line in the first row represents the true number of related features.} \label{fig:supp_fig_scen2_dg0.2}
\end{figure}

\begin{figure}[ht]
\centering
\includegraphics[width=0.9\textwidth]{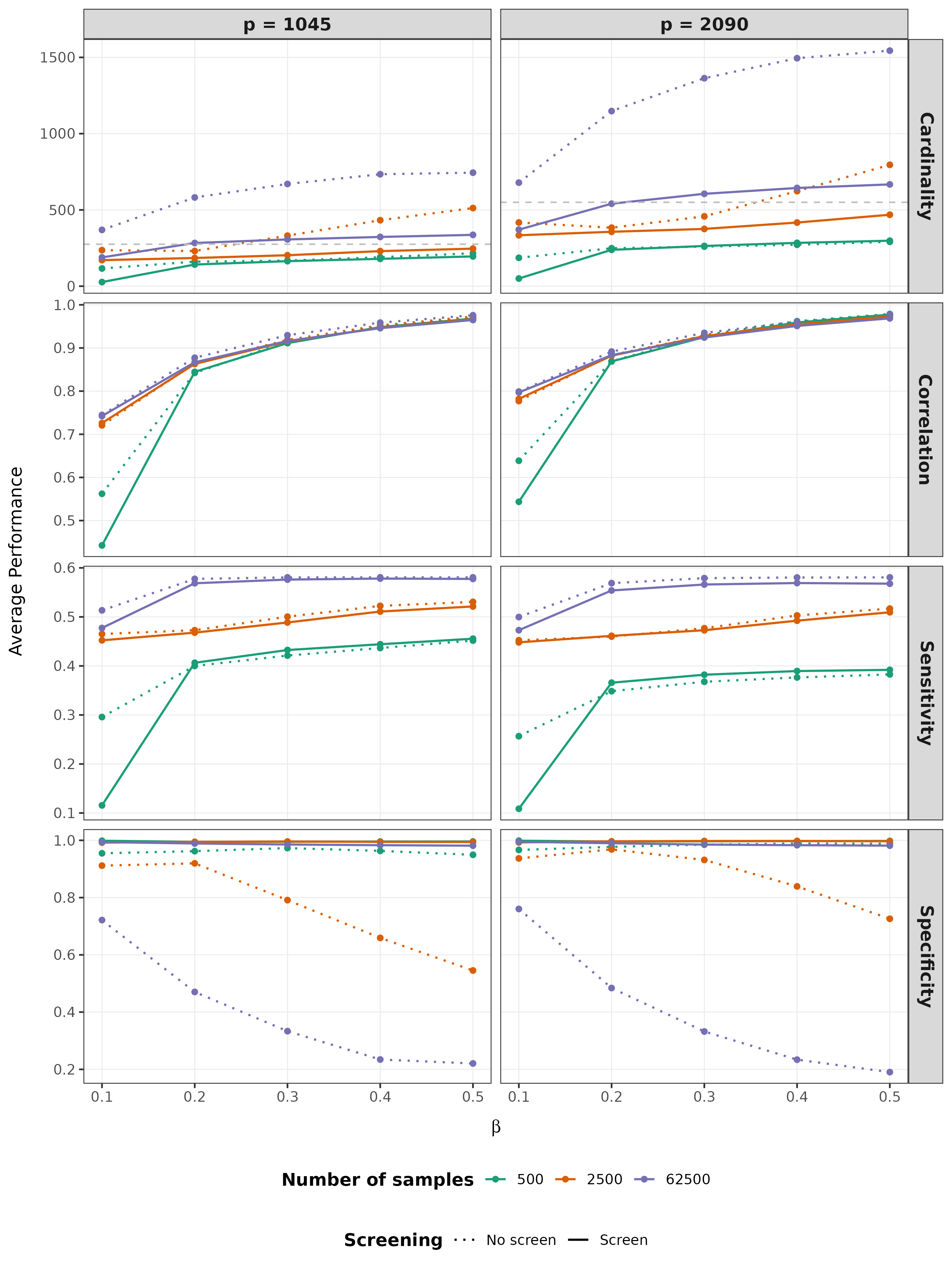}
\caption{Simulation results comparing feature selection strategies across varying sample sizes ($n \in {500, 2500, 62500}$) and feature dimensions ($p \in {1045, 2090}$). Each simulation was based on a LASSO regression model including one exposure, one confounder, and a block-structured set of features comprising 55 ($p = 1045$) or 110 ($p = 2090$) groups of 19 features, with a causal structure as illustrated in Figure \ref{fig:Dagtoy}. Performance was evaluated using six metrics: (row 1) model cardinality (number of selected features), (row 2) correlation between the exposure and the derived feature signature, (row 3) sensitivity to asymptotically selected features, and (row 4) specificity.}\label{fig:supp_scen3_allcrit}
\end{figure}

\begin{figure}[ht]
\centering
\includegraphics[width=0.9\textwidth]{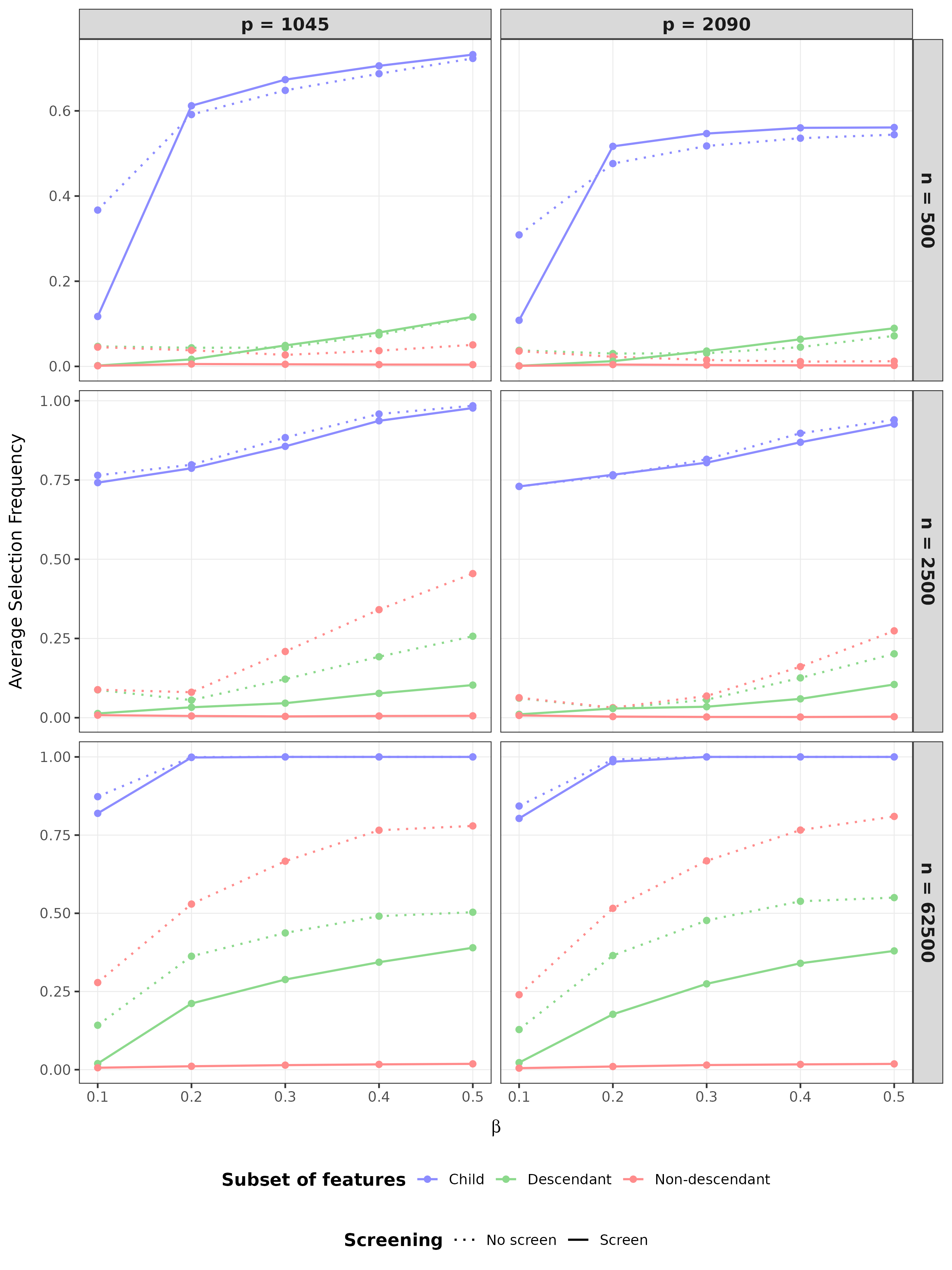}
\caption{Simulation results comparing selection frequency of children, non-child descendants, and non-descendants across varying sample sizes ($n \in {500, 2500, 62500}$) and feature dimensions ($p \in {1045, 2090}$). Each simulation was based on a LASSO regression model including one exposure, one confounder, and a block-structured set of features comprising 55 ($p = 1045$) or 110 ($p = 2090$) groups of 19 features, with a causal structure as illustrated in Figure \ref{fig:Dagtoy}.}
\label{fig:supp_scen3_selfreq}
\end{figure}

\end{appendices}

\end{document}